\documentclass[11pt,a4paper]{article}

\usepackage[utf8]{inputenc}
\usepackage[T1]{fontenc}
\usepackage{lmodern}
\usepackage{amsmath,amssymb,amsthm,graphicx}
\usepackage{bm}
\usepackage[margin=1in]{geometry} 
\usepackage[colorlinks=true, allcolors=blue]{hyperref}

\newtheorem{lemma}{Lemma}
\newtheorem{theorem}{Theorem}

\begin{document}

\title{Classical simulability of quantum circuits followed by\\ sparse classical post-processing}

\author{Yasuhiro Takahashi\textsuperscript{1, 2, }\thanks{\tt ya.takahashi@cs.tsukuba.ac.jp} \hspace{0.8cm} Masayuki Miyamoto\textsuperscript{1, 2} \hspace{0.8cm} Noboru Kunihiro\textsuperscript{1, 2}\\
\textsuperscript{1} Institute of Systems and Information Engineering, University of Tsukuba,\\
Tsukuba, Ibaraki 305-8573, Japan\\
\textsuperscript{2} Tsukuba Institute for Advanced Research, University of Tsukuba,\\
Tsukuba, Ibaraki 305-8577, Japan}
\date{}
\maketitle
\vspace{-1em}

\begin{abstract}
We study the classical simulability of a polynomial-size quantum circuit $C_n$ on $n$ qubits followed by sparse 
classical post-processing (SCP) on $m$ bits, where $m \leq n \leq {\rm poly}(m)$. The SCP is described by a 
non-zero Boolean function $f_m$ that is classically computable in polynomial time and is sparse, i.e., has a peaked 
Fourier spectrum. First, we provide a necessary and sufficient condition on $C_n$ such that, for any SCP $f_m$, $C_n$ followed 
by $f_m$ is classically simulable. This characterization extends the result of Van den Nest and implies that various quantum 
circuits followed by SCP are classically simulable. Examples include IQP circuits, Clifford Magic circuits, and the quantum part of Simon's algorithm, 
even though these circuits alone are hard to simulate classically. 
Then, we consider the case where $C_n$ has constant depth~$d$. While it is unlikely that, for any SCP $f_m$, 
$C_n$ followed by $f_m$ is classically simulable, we show that it is simulable by a polynomial-time probabilistic algorithm with access 
to commuting quantum circuits on $n+1$ qubits. Each such circuit consists 
of at most $\deg(f_m)$ commuting gates and each commuting gate acts on at most $2^d+1$ qubits, 
where $\deg(f_m)$ is the Fourier degree of $f_m$. This provides a better understanding of the hardness 
of simulating constant-depth quantum circuits followed by SCP.
\end{abstract}


\section{Introduction}

\subsection{Background and main results}
A central challenge in quantum computation is to clarify the computational gap between quantum and 
classical computers~\cite{Mon}. One fruitful approach toward this goal is to study the classical simulability of 
quantum computational models~\cite{Harrow}. In particular, since well-known fast quantum algorithms, such as Simon's and Shor's algorithms~\cite{Simon,Shor}, rely on 
classical post-processing, it is natural to examine a model based on a quantum circuit followed by classical post-processing. 
Van den Nest studied Simon-type quantum circuits followed by sparse classical post-processing (SCP), 
which simplifies the non-sparse post-processing in the original algorithm, and showed that such computations are 
classically simulable~\cite{van2}. Here, classical post-processing is said to be sparse if the Boolean function describing it 
has a peaked Fourier spectrum. However, as we discuss in Section 3, constant-depth quantum circuits followed by 
SCP are unlikely to be classically simulable. This sharp contrast with the Simon-type setting suggests that the classical 
simulability of quantum circuits followed by SCP is highly sensitive to the structure of the underlying quantum circuit. 
Therefore, to clarify the boundary between classically simulable quantum computations and those 
that are not, it is essential to investigate the classical simulability of general quantum circuits followed by SCP.

In this paper, we study the classical simulability of a computational model consisting of a polynomial-size quantum circuit $C_n$ on $n$ qubits 
and a non-zero Boolean function $f_m:\{0,1\}^m \to \{0,1\}$, which is an SCP, where $m \leq n \leq {\rm poly}(m)$. Specifically, $C_n$ is applied to the initial state $|0^n\rangle$, 
$m$ qubits are measured in the computational basis, and the classical outcomes of the measurements are processed by $f_m$. We assume that $f_m$ is classically computable in polynomial time and is sparse, 
i.e., has a peaked Fourier spectrum in the sense that the Fourier sparsity of 
$f_m$ is upper-bounded by some known polynomial in $m$: 
$$| \{s\in \{0,1\}^m \mid \widehat{f_m}(s)\neq 0\}| \leq {\rm poly}(m),$$
where the Fourier coefficients are given by
$$\widehat{f_m}(s)=\frac{1}{2^m}\sum_{x \in\{0,1\}^m}f_m(x)(-1)^{s\cdot x}.$$
The overall computation, $C_n$ followed by $f_m$, is illustrated in Fig.~\ref{figure1}. This computation is said to be classically simulable if there exists a 
polynomial-time probabilistic algorithm that, with exponentially-small error probability, approximates the probability that the output of the computation is~1 
with polynomially-small additive error in $n$, given the classical description of $C_n$ and a polynomial-time classical algorithm for evaluating $f_m$~\cite{BSS,van2}. 
The same notion of simulability is used throughout this paper for other, more general computational models.

There have been various studies on quantum circuits followed by classical post-processing. To our knowledge, Van den Nest was the first 
to consider quantum circuits followed by SCP~\cite{van2}; however, that work was restricted to quantum circuits with a specific structure similar 
to the quantum part of Simon's algorithm. Bravyi et al.\ investigated a related model~\cite{BSS}, but their setting did not involve SCP. 
In addition to these studies, several works have examined quantum circuits followed by restricted forms of classical post-processing. Slote 
studied the model ${\sf AC}^0 \circ {\sf QNC}^0$~\cite{Slote}, consisting of a ${\sf QNC}^0$ circuit followed by an ${\sf AC}^0$ function. Here, 
a ${\sf QNC}^0$ circuit is a constant-depth quantum circuit and an ${\sf AC}^0$ function is a Boolean function computed by a constant-depth 
classical circuit with unbounded fan-in. This line of research was extended by Anshu et al.~\cite{Anshu} and Ghosh et al.~\cite{Ghosh}. 
The model studied by Takahashi et al.\ can be viewed as a ${\sf QNC}^0$ 
circuit followed by an AND function on an arbitrary number of inputs~\cite{Takahashi-hardness}, although classical post-processing was not explicitly mentioned. 
Despite these efforts, no prior work has analyzed the classical simulability of general quantum circuits followed by SCP, 
which is the focus of the present paper.

\begin{figure}[t]
\centering
\includegraphics[width=0.65\linewidth]{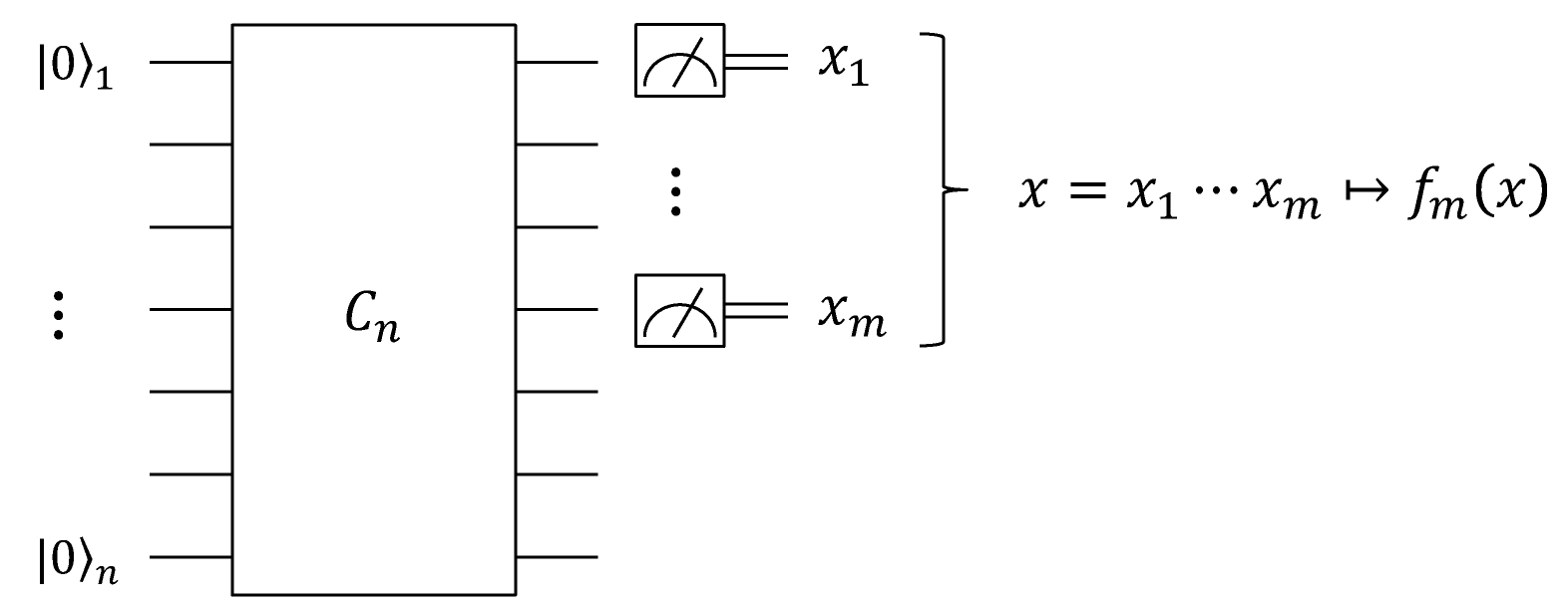}
\caption{$C_n$ followed by $f_m$. The circuit $C_n$ is applied to the initial state $|0^n\rangle$, the top $m$ qubits are measured in the computational basis to obtain a classical outcome $x= x_1\cdots x_m \in \{0,1\}^m$, and the non-zero Boolean function $f_m$ is applied to $x$, yielding the output $f_m(x) \in \{0,1\}$.}
\label{figure1}
\end{figure}

First, we provide a necessary and sufficient condition on $C_n$ such that, for any SCP $f_m$, $C_n$ followed by $f_m$ is classically simulable. 
This is done by using the classical approximability of certain real numbers defined from $C_n$, 
where ``classically approximable'' is understood in the same sense as in the above definition of classical simulability, 
namely that the quantity can be approximated by a polynomial-time probabilistic algorithm with polynomially-small additive error and with exponentially-small error 
probability. For any $s=s_1\cdots s_m \in \{0,1\}^m$, we define
$$Z(s) = \bigotimes_{j=1}^m Z_j^{s_j},$$
which is the unitary operation on the $m$ measured qubits. Here, $Z_j^0=I$ and $Z_j^1=Z_j$, where $Z_j$ denotes the 
Pauli-$Z$ operation on the $j$-th qubit. Let $I_{n-m}$ be the identity operation on the $n-m$ unmeasured qubits. Our first result is described as follows:

\begin{theorem}\label{main-theorem}
Let $C_n$ be an arbitrary polynomial-size quantum circuit on $n$ qubits initialized to $|0^n\rangle$, and $m$ be an arbitrary integer such that 
$m \leq n \leq {\rm poly}(m)$. The following items are equivalent:
\begin{itemize}
\item[\rm (1)] For any SCP $f_m$, $C_n$ followed by $f_m$ is classically simulable.

\item[\rm (2)] For any $s \in \{0,1\}^m \setminus\{0^m\}$, the Pauli expectation value
$$\langle 0^n |C_n^\dagger (Z(s) \otimes I_{n-m}) C_n |0^n\rangle$$
is classically approximable.
\end{itemize}
\end{theorem}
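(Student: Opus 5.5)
The plan is to write the acceptance probability in terms of the Fourier expansion of $f_m$ and the Pauli expectation values. Let $p(x)$ be the probability of obtaining outcome $x\in\{0,1\}^m$ when the top $m$ qubits of $C_n|0^n\rangle$ are measured, and set $E(s)=\langle 0^n|C_n^\dagger(Z(s)\otimes I_{n-m})C_n|0^n\rangle$. Then
$$\Pr[\text{output}=1]=\sum_x p(x)f_m(x)=\sum_{s}\widehat{f_m}(s)\sum_x p(x)(-1)^{s\cdot x}=\sum_{s}\widehat{f_m}(s)\,E(s),$$
because $(Z(s)\otimes I_{n-m})$ is diagonal with entries $(-1)^{s\cdot x}$ on the measured register. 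Both directions of the equivalence come from this identity; the term $s=0^m$ contributes the trivial value $E(0^m)=1$.

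\textbf{(2)$\Rightarrow$(1).} Given $f_m$ with at most $q(m)={\rm poly}(m)$ nonzero Fourier coefficients, I first need the nonzero coefficients $\widehat{f_m}(s)$ and their values, up to small additive error, using only query access to $f_m$. I would use the Kushilevitz--Mansour/Goldreich--Levin algorithm, as Van den Nest does. Since $|\widehat{f_m}(s)|\le 1$ and $\sum_s\widehat{f_m}(s)^2=\Pr_x[f_m(x)=1]\le 1$, this algorithm finds with exponentially small failure probability every $s$ with $|\widehat{f_m}(s)|\ge \epsilon/q$ in polynomial time. It also gives each such coefficient to additive precision $\epsilon/q$ by sampling. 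The missed coefficients contribute at most $q\cdot\epsilon/q$ in total, since $|E(s)|\le 1$. For each found $s\neq 0^m$, item (2) gives $E(s)$ to additive error $\epsilon/q$. Summing at most $q$ terms, each with error $O(\epsilon/q)$, gives total error $O(\epsilon)$. A union bound over polynomially many subroutine calls keeps the failure probability exponentially small. The main technical point is this error bookkeeping. Fourier sparsity is what makes the truncation loss polynomially controllable, since each missed coefficient is individually small.

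\textbf{(1)$\Rightarrow$(2).} Fix $s\neq 0^m$ and take the parity function $f_m(x)=s\cdot x \bmod 2$. It is nonzero, computable in polynomial time, and has only two nonzero Fourier coefficients: $\widehat{f_m}(0^m)=1/2$ and $\widehat{f_m}(s)=-1/2$. Hence it is an SCP. By the identity above, $\Pr[\text{output}=1]=(1-E(s))/2$. So the simulator from (1) approximates $E(s)=1-2\Pr[\text{output}=1]$ with twice the additive error and the same failure probability, which is exactly (2). One subtlety I would check is that the sparsity bound must be a known polynomial uniform in $s$; parities have sparsity $2$, so this holds.
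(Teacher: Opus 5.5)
Your proposal is correct and follows essentially the same route as the paper. For (1)$\Rightarrow$(2) you use the parity function $s\cdot x$; for (2)$\Rightarrow$(1) you use Kushilevitz--Mansour, Chernoff--Hoeffding estimation of the coefficients and item (2) for the Pauli expectation values, with the same split into truncation and estimation errors. The differences are cosmetic: you expand $f_m$ directly to get $p(C_n,f_m)=\sum_s\widehat{f_m}(s)E(s)$, avoiding the detour through $g_m=(-1)^{f_m}$ and Plancherel. Your bound of $q$ on the number of summed terms relies on the Kushilevitz--Mansour guarantee that every returned $s$ has $|\widehat{f_m}(s)|$ bounded away from zero; the paper instead bounds the list size by $4\theta(n)^2$ and tightens its per-term precision accordingly.
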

\noindent
Theorem~\ref{main-theorem} implies that various quantum circuits followed by SCP are classically simulable. 
A primary example is the case where $C_n$ has a structure similar to the quantum part of Simon's 
algorithm. This includes IQP circuits~\cite{Shepherd}, which are regarded as candidates for demonstrating quantum supremacy~\cite{Harrow} 
since classically sampling from their output probability distributions is believed to be computationally intractable. By leveraging the properties of CT 
states and ECS operations~\cite{van2}, we can show that item (2) in Theorem~\ref{main-theorem} holds for these circuits. Consequently, 
by Theorem~\ref{main-theorem}, for any SCP $f_m$, $C_n$ followed by $f_m$ is classically simulable. This recovers the result 
of Van den Nest, while explicitly clarifying the applicability to IQP circuits, which was not formally 
addressed in~\cite{van2}. Furthermore, Theorem~\ref{main-theorem} readily applies to the case where $C_n$ is a Clifford Magic circuit~\cite{Yoganathan}, another 
prominent model for quantum supremacy. As in the previous example, item (2) holds and thus, for any SCP $f_m$, 
$C_n$ followed by $f_m$ is classically simulable. Notably, the original argument in~\cite{van2} does not directly imply this 
result for Clifford Magic circuits and thus Theorem~\ref{main-theorem} constitutes a proper extension of the previous work.

Then, we consider the case where $C_n$ is a constant-depth quantum circuit, another model associated with quantum supremacy~\cite{Harrow}. 
As we discuss in Section 3, it is unlikely that, for any SCP $f_m$, $C_n$ followed by $f_m$ is classically simulable. However, we show that 
it becomes simulable if the polynomial-time probabilistic algorithm is augmented with access to commuting quantum circuits. To describe this, 
for any non-zero real-valued Boolean function $h_m$ on $m$ bits, we define its Fourier degree $\deg(h_m)$ as
$$\max \{|s| \mid \widehat{h_m}(s)\neq 0, s \in \{0,1\}^m \},$$
where $|s|$ denotes the Hamming weight of $s$. Moreover, for any unitary operation $U$, we define its support ${\rm supp}(U)$ as 
the set of qubits on which $U$ acts non-trivially. Our second result is described as follows:
\begin{theorem}\label{second-theorem}
Let $C_n$ be an arbitrary polynomial-size quantum circuit of constant depth $d$ on $n$ qubits initialized to $|0^n\rangle$, 
and $m$ be an arbitrary integer such that $m \leq n \leq {\rm poly}(m)$. Then, for any SCP $f_m$, $C_n$ followed by $f_m$ is 
simulable by a polynomial-time probabilistic algorithm with access to commuting quantum circuits on $n+1$ qubits satisfying the following conditions:
\begin{itemize}
\item The number of commuting gates in each commuting quantum circuit is at most $\deg(f_m)$.

\item The number of qubits on which each commuting gate acts is at most
$$1+ \max \{|{\rm supp}(C_n^\dagger (Z_j \otimes I_{n-m}) C_n)| \mid 1 \leq j \leq m \}.$$
\end{itemize}
\end{theorem}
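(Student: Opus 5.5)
The plan is to write the acceptance probability as a short Fourier sum and turn each Pauli expectation value into an expectation of a small product of commuting operators, which a Hadamard test on $n+1$ qubits can estimate.

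\textbf{Fourier expansion.} Let $p$ be the probability that the output is $1$. Expanding $f_m$ in its Fourier basis gives
$$p=\sum_{x}\Pr[x]\,f_m(x)=\sum_{s}\widehat{f_m}(s)\,\langle 0^n|C_n^\dagger (Z(s)\otimes I_{n-m})C_n|0^n\rangle ,$$
because $\sum_x \Pr[x](-1)^{s\cdot x}$ is exactly this Pauli expectation value. By sparsity only polynomially many $s$ contribute, and every contributing $s$ has $|s|\le \deg(f_m)$.

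\textbf{Recovering the coefficients.} The support $S=\{s:\widehat{f_m}(s)\neq 0\}$ and the values $\widehat{f_m}(s)$ must be found classically. I would use the Goldreich--Levin/Kushilevitz--Mansour algorithm, or the standard sparse Fourier learning used by Van den Nest, with query access to $f_m$. This gives all coefficients to exponential precision with exponentially small failure probability, since $f_m$ is Boolean and polynomially sparse. Coefficients of a $k$-sparse Boolean function are multiples of $2^{1-\lceil\log k\rceil}$, which lets us recover them exactly. This is the same step the proof of Theorem~\ref{main-theorem} needs for (2)$\Rightarrow$(1), and I assume that argument.

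\textbf{Commuting form of each expectation value.} Fix $s\in S$ with $s\neq 0^m$; the term $s=0^m$ contributes the known constant $\widehat{f_m}(0^m)$. Write
$$C_n^\dagger (Z(s)\otimes I)C_n=\prod_{j:s_j=1}P_j,\qquad P_j=C_n^\dagger (Z_j\otimes I)C_n .$$
The factors $P_j$ pairwise commute, since the $Z_j$ do and conjugation preserves commutation. Each $P_j$ is a Hermitian unitary with $|{\rm supp}(P_j)|\le\max_j|{\rm supp}(C_n^\dagger(Z_j\otimes I)C_n)|$, which is at most $2^d$ by the light-cone argument. Its explicit description comes from the gates in the backward light cone of qubit $j$, which is computable because $d$ is constant. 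There are at most $|s|\le\deg(f_m)$ factors.

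\textbf{Hadamard test.} Add one ancilla prepared in $|+\rangle$ and apply the controlled operations $\Lambda(P_j)$ for each $j$ with $s_j=1$. These controlled gates still commute with one another, because they share a control and their targets commute. Each acts on at most $1+\max_j|{\rm supp}(P_j)|$ qubits, and there are at most $\deg(f_m)$ of them. The input is $|0^n\rangle|+\rangle$. Measuring the ancilla in the $X$ basis gives a $\pm1$ outcome whose mean is ${\rm Re}\langle 0^n|\prod_j P_j|0^n\rangle$, which equals the desired real expectation value. So the oracle is exactly a commuting circuit of the kind described in the statement, on $n+1$ qubits.

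\textbf{Error analysis.} Sample each term polynomially many times, using a Chernoff bound with a median trick to push the failure probability down to exponentially small. Combine the estimates with weights $\widehat{f_m}(s)$. By Parseval, $\sum_s|\widehat{f_m}(s)|\le\sqrt{|S|}$, which is polynomial. Estimating each term to additive error $\epsilon/\sqrt{|S|}$ and applying a union bound over $S$ therefore gives total additive error $\epsilon$, with polynomial cost.

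\textbf{Main obstacle.} I expect the hard part to be recovering the Fourier support efficiently and with exponentially small failure probability from only evaluation access to $f_m$. I would import this from the proof of Theorem~\ref{main-theorem}, where the same step is needed. Once the support and coefficients are in hand, the commuting-circuit construction is routine.
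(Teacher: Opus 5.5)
Your proposal is correct and follows essentially the paper's route: Fourier-expand the acceptance probability, locate the polynomially many relevant $s$ with Kushilevitz--Mansour, and estimate each $\langle 0^n|C_n^\dagger(Z(s)\otimes I_{n-m})C_n|0^n\rangle$ by a Hadamard test. Its gates $\Lambda(P_j)$ pairwise commute, number $|s|\le\deg(f_m)$, and act on at most $1+\max_j|{\rm supp}(P_j)|$ qubits; the paper writes them as $(H\otimes C_n^\dagger)(CZ_j\otimes I_{n-m})(H\otimes C_n)$, which absorbs your $|+\rangle$ preparation and $X$-basis measurement so the circuit starts in $|0^{n+1}\rangle$ and is measured in the $Z$-basis. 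The only real difference is that you recover the Fourier coefficients exactly via granularity, whereas the paper simply estimates $\widehat{g_m}(s)$ for $g_m=(-1)^{f_m}$ by Chernoff--Hoeffding and absorbs that error in a three-term bound, so exact recovery is unnecessary (your route still works, though for $\{0,1\}$-valued $f_m$ of sparsity $k$ the granularity is $2^{-\lceil\log(k+1)\rceil}$ rather than $2^{1-\lceil\log k\rceil}$).
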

\noindent
Since $\deg(f_m) \leq m$ and the maximum support size in the second condition is at most $2^d$, each commuting quantum circuit appearing 
in Theorem~\ref{second-theorem} consists of at most $m$ commuting gates. Moreover, these circuits are $(2^d+1)$-local in the sense that each commuting gate acts on at most $2^d+1$ qubits. 
Theorem~\ref{second-theorem} demonstrates, for the first time, that commuting quantum circuits are sufficient for simulating constant-depth 
quantum circuits followed by SCP. Consequently, it establishes a non-trivial upper bound on the quantum resources required 
to simulate this computational model, thereby providing a clearer understanding of the hardness of simulating constant-depth quantum circuits followed by SCP.

\subsection{Overview of techniques}

The implication (1) $\Rightarrow$ (2) in Theorem~\ref{main-theorem} follows straightforwardly by considering $f_m$ as a parity function, 
which isolates a single Fourier coefficient. Thus, the core of the proof lies in showing (2) $\Rightarrow$ (1). Under assumption (2), for an arbitrary SCP $f_m$, 
we construct a polynomial-time probabilistic algorithm to approximate the probability that $C_n$ followed by $f_m$ outputs 1. This probability, 
which is denoted as $p(C_n,f_m)$, can be represented as
$$p(C_n,f_m) = \sum_{x\in \{0,1\}^m} f_m(x)p_m(x),$$
where $p_m$ is the output probability distribution of $C_n$ over $\{0,1\}^m$. A key step is transforming this probability into the Fourier domain. 
Specifically, by applying the Plancherel Theorem~\cite{Odonnel-book} and a general relationship between a quantum circuit and the Fourier coefficients 
of its output probability distribution, which is a slightly extended version of the one shown in~\cite{Takahashi-tcs}, we show that approximating $p(C_n,f_m)$ reduces 
to approximating the value
$$p'(C_n,f_m) = \sum_{s\in \{0,1\}^m} \widehat{g_m}(s) \langle 0^n |C_n^\dagger (Z(s) \otimes I_{n-m}) C_n |0^n\rangle,$$
where $g_m(x)=(-1)^{f_m(x)}$ for any $x\in\{0,1\}^m$. Since $f_m$ is sparse, $g_m$ is also sparse, meaning that $p'(C_n,f_m)$ can be represented as
$$\sum_{s\in L} \widehat{g_m}(s) \langle 0^n |C_n^\dagger (Z(s) \otimes I_{n-m}) C_n |0^n\rangle,$$
where
$$L= \{s\in \{0,1\}^m \mid \widehat{g_m}(s)\neq 0\}$$
and $|L|$ is upper-bounded by some known polynomial in $m$. 
To compute the sum, we employ the Kushilevitz-Mansour algorithm~\cite{Kushilevitz} to efficiently identify the significant Fourier coefficients of $g_m$.  
We then approximate each $\widehat{g_m}(s)$ using the Chernoff-Hoeffding bound and each Pauli expectation value via assumption (2). 
Since the number of terms $|L|$ is polynomially bounded, the total additive error remains polynomially small.

To show Theorem~\ref{second-theorem}, where $C_n$ is a quantum circuit of constant depth $d$ on $n$ qubits, we approximate $p(C_n,f_m)$ by utilizing
the aforementioned proof of (2) $\Rightarrow$ (1). Specifically, it suffices to approximate each Pauli expectation value
$$\langle 0^n |C_n^\dagger (Z(s) \otimes I_{n-m}) C_n |0^n\rangle,$$
but in this setting we instead use a polynomial-time probabilistic algorithm with access to commuting quantum circuits on $n+1$ qubits satisfying the conditions 
on the number of commuting gates and the number of qubits on which each commuting gate acts. Indeed, we adapt the construction 
of Ni et al.~\cite{Ni} for approximating $|\langle 0^n | C_n |0^n\rangle|^2$. More concretely, 
we utilize the decomposition
$$\langle 0^n |C_n^\dagger (Z(s) \otimes I_{n-m}) C_n |0^n\rangle =\langle 0^n | \prod_{1 \leq j \leq m:s_j=1} (C_n^\dagger (Z_j \otimes I_{n-m}) C_n) |0^n\rangle.$$
Each operation $V_j = C_n^\dagger (Z_j \otimes I_{n-m}) C_n$ can be regarded 
as a gate in a commuting quantum circuit, since all $\{V_j\}$ mutually commute. By incorporating these gates into a Hadamard test with 
one ancillary qubit, we can construct a commuting quantum circuit on $n+1$ qubits. The probability of obtaining $0$ by measuring the ancillary qubit is
$$\frac{1+ \langle 0^n | \prod_{j :s_j=1} (C_n^\dagger (Z_j \otimes I_{n-m}) C_n) |0^n\rangle }{2}.$$
By applying the Chernoff-Hoeffding bound, each Pauli expectation value can be approximated efficiently with access to the commuting quantum circuit 
on $n+1$ qubits. This construction directly satisfies the conditions of Theorem~\ref{second-theorem}: the number of commuting gates in the commuting 
quantum circuit is $|s|$, which is upper-bounded by $\deg(g_m) = \deg(f_m)$, and the number of qubits on which each commuting gate acts is 
at most
$$1+ \max \{|{\rm supp}(C_n^\dagger (Z_j \otimes I_{n-m}) C_n)| \mid 1 \leq j \leq m \}.$$
Combining these ingredients yields the simulation claimed in Theorem~\ref{second-theorem}.

\section{Preliminaries}

\subsection{Quantum circuits}

We use the standard notation for quantum states and the standard diagrams for quantum circuits~\cite{Nielsen}. A quantum circuit consists of elementary gates, 
each of which is in the gate set $\{H,T,CZ\}$, where
$$
H=\frac{1}{\sqrt{2}}
\begin{pmatrix}
1 & 1 \\
1 & -1
\end{pmatrix},\
T=
\begin{pmatrix}
1 & 0 \\
0 & e^{\frac{\pi i}{4}}
\end{pmatrix},\
CZ=
\begin{pmatrix}
1 & 0 & 0 & 0\\
0 & 1 & 0 & 0\\
0 & 0 & 1 & 0\\
0 & 0 & 0 & -1
\end{pmatrix}.
$$
This gate set is approximately universal for quantum computation. We define $S=T^2$, $Z=T^4$, and $X=HZH$. 
We sometimes use $CCZ$ gates, where
$$CCZ=|00\rangle\langle 00|\otimes I + |01\rangle\langle 01|\otimes I +|10\rangle\langle 10|\otimes I + |11\rangle\langle 11|\otimes Z,$$
which can be decomposed exactly into a constant number of the above elementary gates. For any unitary operation $U$, we define the support of $U$, 
denoted by ${\rm supp}(U)$, as the set of qubits on which $U$ acts non-trivially.

The complexity measures of a quantum circuit are its size and depth. The size is the number of 
elementary gates in the circuit. To define the depth, 
we regard the circuit as a set of layers $1,\ldots,d$ consisting of elementary gates, where gates 
in the same layer act on pairwise disjoint sets of qubits and any gate in layer $j$ is applied 
before any gate in layer $j+1$. The depth is the smallest possible value of $d$~\cite{Fenner}. Each layer $j$ consists of gates that can 
be performed in parallel, since they act on pairwise disjoint sets of qubits.

We deal with a uniform family of polynomial-size quantum circuits $\{C_n\}_{n\geq 1}$, where each $C_n$ has $n$ input qubits initialized to $|0^n\rangle$. 
The uniformity means that the function $1^n \mapsto \overline{C_n}$ is computable by a polynomial-time classical Turing machine, 
where $\overline{C_n}$ is the classical description of $C_n$~\cite{Nishimura-tcs}. After performing $C_n$, we perform measurements 
of $m$ qubits in the computational basis (i.e., the $Z$-basis), where $m\leq n$. Without loss of generality, 
we assume that the measured qubits are the first $m$ qubits corresponding to the top $m$ qubits in a circuit diagram. 
The output probability distribution $p_m$ of $C_n$ over $\{0,1\}^m$ is defined as
$$p_m(x)=\langle 0^n |C_n ^\dagger (|x\rangle \langle x| \otimes I _{n-m} ) C_n|0^n\rangle,$$ 
where a symbol denoting a quantum circuit also denotes its matrix representation in the $Z$ basis and $I_{n-m}$ is the identity operation 
on the $n-m$ unmeasured qubits.  For any $s=s_1\cdots s_m \in \{0,1\}^m$, we define
$$Z(s) = \bigotimes_{j=1}^m Z_j^{s_j},$$
which is the unitary operation on the $m$ measured qubits. Here, $Z_j^0=I$ and $Z_j^1=Z_j$, where $Z_j$ denotes the 
Pauli-$Z$ operation on the $j$-th qubit.

\subsection{Fourier expansions of Boolean functions}\label{fourier}

Let $f_m:\{0,1\}^m \to {\mathbb R}$ be an arbitrary real-valued Boolean function on $m$ bits. Then, $f_m$ can be 
uniquely represented as an ${\mathbb R}$-linear combination of $2^m$ basis functions
$$f_m(x)=\sum_{s \in\{0,1\}^m}\widehat{f_m}(s)(-1)^{s\cdot x},$$
which is called the Fourier expansion of $f_m$~\cite{Odonnel-book}. Here, $\widehat{f_m}(s)$ is called the Fourier 
coefficient of $f_m$ and the symbol ``$\cdot$'' represents the inner product of two $m$-bit strings modulo 2, i.e.,
$$s\cdot x =\bigoplus_{j=1}^m s_jx_j$$
for any $s=s_1\cdots s_m$ and $x=x_1\cdots x_m\in \{0,1\}^m$, where $\oplus$ denotes addition modulo 2. It holds that
$$\widehat{f_m}(s)=\frac{1}{2^m}\sum_{x \in\{0,1\}^m}f_m(x)(-1)^{s\cdot x}$$
for any $s\in\{0,1\}^m$. The Fourier sparsity of $f_m$ is defined as the number of its non-zero Fourier coefficients
$$| \{s\in \{0,1\}^m \,|\, \widehat{f_m}(s)\neq 0\}|.$$
We say that $f_m$ is sparse if its Fourier sparsity is upper-bounded by some known polynomial in $m$
(although the set of $s\in\{0,1\}^m$ such that $\widehat{f_m}(s)\neq 0$ is not necessarily known). Moreover, 
for any non-zero Boolean function $f_m$, we define its Fourier degree $\deg(f_m)$ as
$$\max \{|s| \mid \widehat{f_m}(s)\neq 0, s \in \{0,1\}^m \},$$
where $|s|=\sum_{j=1}^m s_j,$ which is the Hamming weight of $s$. While we consider real-valued functions 
for generality, we primarily focus on $\{0,1\}$-valued functions $f_m: \{0,1\}^m \to \{0,1\}$. A simple example of a 
sparse Boolean function is the parity function
$$f_m(x)=\bigoplus_{j=1}^m x_j.$$
In this case, $\{s\in \{0,1\}^m \,|\, \widehat{f_m}(s)\neq 0\}=\{0^m,1^m\}$ and $\deg(f_m) =m$.

\subsection{Quantum circuits followed by SCP}

Let $C_n$ be an arbitrary polynomial-size quantum circuit on $n$ qubits and $f_m:\{0,1\}^m \to \{0,1\}$ be an arbitrary non-zero sparse Boolean 
function that is classically computable in polynomial time, where $m \leq n \leq {\rm poly}(m)$. We regard $f_m$ as a sparse classical post-processing (SCP) 
and consider the following model of computation, which we call $C_n$ followed by $f_m$:
\begin{enumerate}
\item Perform $C_n$ on $n$ qubits initialized to $|0^n\rangle$.

\item Perform measurements of $m$ qubits of the state $C_n|0^n\rangle$ in the $Z$-basis, where the classical outcome of 
the measurement of the $j$-th qubit is $x_j \in \{0,1\}$ for any $1 \leq j \leq m$.

\item Apply $f_m$ to the classical outcomes $x=x_1\cdots x_m \in\{0,1\}^m$ of the measurements.
\end{enumerate}
The output of this computation is $f_m(x) \in \{0,1\}$.

Let $p_m$ be the output probability distribution of $C_n$ as described above. 
The output of $C_n$ followed by $f_m$ is 0 with probability $\sum_{x: f_m(x)=0}p_m(x)$ and 1 
with probability $\sum_{x: f_m(x)=1}p_m(x)$. Thus, we can represent the probability that the output of $C_n$ followed by $f_m$ is 1 as
$$p(C_n,f_m)= \sum_{x\in \{0,1\}^m} f_m(x)p_m(x).$$
This can be viewed as the expectation value of $f_m$ under the probability distribution $p_m$. The overall computation, 
$C_n$ followed by $f_m$, is said to be classically simulable if there exists a polynomial-time probabilistic algorithm that, with exponentially-small error probability, approximates $p(C_n,f_m)$ with 
polynomially-small additive error in $n$, given the classical description of $C_n$ and a polynomial-time classical algorithm for evaluating $f_m$~\cite{BSS,van2}. 
More concretely, for any polynomial $q(n)$, there exists a 
polynomial-time probabilistic algorithm that, with probability greater than $1-1/\exp(n)$, outputs a real number $A$ such that
$$|p(C_n,f_m) - A| \leq \frac{1}{q(n)}.$$

This definition also applies to real numbers: a real number is said to be classically approximable 
if there exists a polynomial-time probabilistic algorithm that, with exponentially-small error probability, approximates it with polynomially-small additive error. 
In Section~\ref{commuting-section}, we introduce a computational model that extends classical 
computation by granting the probabilistic algorithm access to certain quantum resources. A computation is said to be simulable in this extended model 
if $p(C_n,f_m)$ can be approximated by an algorithm in this model with the same accuracy and success probability as above.

\section{General quantum circuits followed by SCP}

\subsection{Hardness of constant-depth quantum circuits followed by SCP}

As a motivation for the characterization provided in Theorem~\ref{main-theorem}, we first discuss the classical simulability of 
constant-depth quantum circuits followed by SCP. Let $C_n$ be an arbitrary constant-depth polynomial-size quantum circuit on $n$ qubits 
initialized to $|0^n\rangle$. For our purposes, it suffices to consider the case where $m=n$ and $f_n$ is the parity function, i.e.,
$$f_n(x)=\bigoplus_{j=1}^n x_j,$$ 
which is a simple example of a sparse Boolean function.

To classically simulate $C_n$ followed by $f_n$, as described in Section 2.3, we need to approximate the probability
$$p(C_n,f_n) = \sum_{x\in \{0,1\}^n} f_n(x)p_n(x) = \sum_{x\in \{0,1\}^n} \left(\bigoplus_{j=1}^n x_j \right) p_n(x).$$
Since it holds that
$$\bigoplus_{j=1}^n x_j = \frac{1}{2}-\frac{1}{2}(-1)^{|x|}$$
and $p_n(x)= \langle 0^n |C_n ^\dagger |x\rangle \langle x| C_n|0^n\rangle$, the above probability can be represented as follows:
 \begin{align*}
p(C_n,f_n) 
&= \sum_{x\in \{0,1\}^n} \left(\frac{1}{2}-\frac{1}{2}(-1)^{|x|} \right) p_n(x) \\
&= \frac{1}{2} - \frac{1}{2} \sum_{x\in \{0,1\}^n} (-1)^{|x|}  \langle 0^n |C_n ^\dagger |x\rangle \langle x| C_n|0^n\rangle \\
&= \frac{1}{2} - \frac{1}{2} \sum_{x\in \{0,1\}^n} \langle 0^n |C_n ^\dagger  Z(1^n) |x\rangle \langle x| C_n|0^n\rangle 
= \frac{1}{2} - \frac{1}{2} \langle 0^n |C_n ^\dagger  Z(1^n) C_n|0^n\rangle,
\end{align*}
where the last equality is due to $\sum_{x\in \{0,1\}^n}|x\rangle\langle x| = I_n$.

This identity implies that, if $C_n$ followed by $f_n$ is classically simulable, then the Pauli expectation value
$$\langle 0^n |C_n ^\dagger  Z(1^n) C_n|0^n\rangle$$
must be classically approximable. Specifically, there would exist a polynomial-time 
probabilistic algorithm that, with exponentially-small error probability, approximates this value with polynomially-small additive error. 
Although a formal complexity‑theoretic hardness proof is not known, such an approximation is widely believed to be 
intractable when $C_n$ is a general constant‑depth quantum circuit~\cite{Ni}; Bravyi et al.\ show that, under special structural 
assumptions (e.g., geometric constraints), related quantities can be approximated classically in quasi‑polynomial or even 
polynomial time~\cite{Bravyi-mean,Bravyi-peaked}, but those algorithms crucially rely on the assumed structure 
and are not believed to extend to the general case.

\subsection{Proof of \texorpdfstring{(1) $\Rightarrow$ (2)}{(1) => (2)} in Theorem~\ref{main-theorem}}

We first give a representation of the probability $p(C_n,f_m)$, which is useful for showing Theorem~\ref{main-theorem}:

\begin{lemma}\label{prob-representation}
Let $C_n$ be an arbitrary polynomial-size quantum circuit on $n$ qubits initialized to $|0^n\rangle$ and $f_m:\{0,1\}^m \to \{0,1\}$ be 
an arbitrary SCP on $m$ bits, where $0 < m \leq n$. Then, the probability $p(C_n,f_m)$ can be represented as
$$p(C_n,f_m) = \frac{1}{2}-\frac{1}{2}\sum_{s\in \{0,1\}^m} \widehat{g_m}(s) \langle 0^n |C_n^\dagger (Z(s) \otimes I_{n-m}) C_n |0^n\rangle,$$
where $g_m(x)=(-1)^{f_m(x)}$ for any $x\in\{0,1\}^m$.
\end{lemma}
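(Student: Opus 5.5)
The plan is to repeat the parity computation from Section 3.1, replacing the single character $(-1)^{|x|}$ by the Fourier expansion of $g_m$. Since $f_m$ is $\{0,1\}$-valued, we have the pointwise identity $f_m(x)=\frac{1}{2}-\frac{1}{2}(-1)^{f_m(x)}=\frac{1}{2}-\frac{1}{2}g_m(x)$ for every $x\in\{0,1\}^m$. Substituting this into $p(C_n,f_m)=\sum_x f_m(x)p_m(x)$ and using $\sum_x p_m(x)=1$ gives
$$p(C_n,f_m)=\frac{1}{2}-\frac{1}{2}\sum_{x\in\{0,1\}^m} g_m(x)\,p_m(x).$$

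Next I will expand $g_m$ in its Fourier basis as in Section~\ref{fourier}, writing $g_m(x)=\sum_s \widehat{g_m}(s)(-1)^{s\cdot x}$. Exchanging the two finite sums gives
$$\sum_x g_m(x)p_m(x)=\sum_s \widehat{g_m}(s)\sum_x (-1)^{s\cdot x}p_m(x).$$
It then suffices to show, for each fixed $s$, that
$$\sum_x (-1)^{s\cdot x}p_m(x)=\langle 0^n|C_n^\dagger(Z(s)\otimes I_{n-m})C_n|0^n\rangle .$$

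For this I use $p_m(x)=\langle 0^n|C_n^\dagger(|x\rangle\langle x|\otimes I_{n-m})C_n|0^n\rangle$ together with the eigenvalue identity $Z(s)|x\rangle=(-1)^{s\cdot x}|x\rangle$. This identity holds because $Z_j^{s_j}|x_j\rangle=(-1)^{s_jx_j}|x_j\rangle$ and the signs multiply to $(-1)^{\sum_j s_jx_j}=(-1)^{s\cdot x}$. Hence
$$\sum_x (-1)^{s\cdot x}|x\rangle\langle x|=Z(s)\sum_x|x\rangle\langle x|=Z(s),$$
using completeness $\sum_x|x\rangle\langle x|=I_m$. Tensoring with $I_{n-m}$ and sandwiching by $C_n|0^n\rangle$ gives the claimed expectation value, which completes the identity.

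There is no real obstacle here. The only point needing care is to convert $f_m$ to the $\pm1$-valued $g_m$ before taking Fourier coefficients, so that the constant $\frac{1}{2}$ and the overall sign come out exactly as stated. Sparsity of $f_m$ is not needed for this lemma; it is just an identity.
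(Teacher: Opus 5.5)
Your proof is correct and uses the same ingredients as the paper: the rewriting $f_m=\frac{1}{2}-\frac{1}{2}g_m$, the Fourier expansion of $g_m$, and the identity $\sum_x(-1)^{s\cdot x}|x\rangle\langle x|=Z(s)$. The only difference is that you expand $g_m$ alone and evaluate $\sum_x(-1)^{s\cdot x}p_m(x)$ directly, whereas the paper expands both $g_m$ and $p_m$, applies Plancherel, and then identifies $2^m\widehat{p_m}(s)$ with the same Pauli expectation value; your version is a slightly shorter route to the same identity.
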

\begin{proof}
Let $p_m$ be the output probability distribution of $C_n$ over $\{0,1\}^m$. A simple calculation implies the following 
representation of ${\mathbb E}_x[g_m(x)p_m(x)]$, which is the Plancherel Theorem~\cite{Odonnel-book}:
\begin{align*}
\frac{1}{2^m} \sum_{x\in \{0,1\}^m} g_m(x)p_m(x) &= \frac{1}{2^m} \sum_{x\in \{0,1\}^m} \left(\sum_{s\in \{0,1\}^m} \widehat{g_m}(s)(-1)^{s\cdot x}\right) \left(\sum_{t\in \{0,1\}^m} \widehat{p_m}(t)(-1)^{t\cdot x}\right) \\
&= \frac{1}{2^m}\sum_{s,t\in \{0,1\}^m} \widehat{g_m}(s)\widehat{p_m}(t)\sum_{x\in \{0,1\}^m}(-1)^{(s+t)\cdot x} 
= \sum_{s\in \{0,1\}^m} \widehat{g_m}(s)\widehat{p_m}(s),
\end{align*}
where $s+t$ is the bitwise addition of $s$ and $t$ modulo 2. Thus, we can represent $p(C_n,f_m)$ as follows: 
\begin{align*}
p(C_n,f_m) 
&= \sum_{x\in \{0,1\}^m} f_m(x)p_m(x) = \sum_{x\in \{0,1\}^m} \left(\frac{1}{2}-\frac{1}{2}(-1)^{f_m(x)} \right) p_m(x)\\
&= \frac{1}{2} - \frac{1}{2} \sum_{x\in \{0,1\}^m} g_m(x)p_m(x) =\frac{1}{2} - \frac{2^m}{2}\sum_{s\in \{0,1\}^m} \widehat{g_m}(s)\widehat{p_m}(s).
\end{align*}
Using the identities $p_m(x) = \langle 0^n |C_n^\dagger (|x\rangle\langle x| \otimes I_{n-m}) C_n |0^n\rangle$
and $\sum_{x\in \{0,1\}^m}|x\rangle\langle x| = I_m$, 
we obtain the following relationship between $C_n$ and $\widehat{p_m}(s)$ for any $s\in \{0,1\}^m$, which is a slightly extended version of the one shown in~\cite{Takahashi-tcs}:
\begin{align*}
\widehat{p_m}(s)
&= \frac{1}{2^m}\sum_{x\in \{0,1\}^m} p_m(x)(-1)^{s\cdot x} = \frac{1}{2^m}\sum_{x\in \{0,1\}^m} \langle 0^n |C_n^\dagger (|x\rangle\langle x| \otimes I_{n-m}) C_n |0^n\rangle (-1)^{s\cdot x} \\
&= \frac{1}{2^m}\sum_{x\in \{0,1\}^m} \langle 0^n |C_n^\dagger (Z(s)|x\rangle\langle x| \otimes I_{n-m}) C_n |0^n\rangle
  = \frac{1}{2^m}\langle 0^n |C_n^\dagger (Z(s) \otimes I_{n-m}) C_n |0^n\rangle.
\end{align*}
Therefore,
$$p(C_n,f_m) = \frac{1}{2} - \frac{2^m}{2}\sum_{s\in \{0,1\}^m} \widehat{g_m}(s)\widehat{p_m}(s) =  \frac{1}{2}-\frac{1}{2}\sum_{s\in \{0,1\}^m} \widehat{g_m}(s) \langle 0^n |C_n^\dagger (Z(s) \otimes I_{n-m}) C_n |0^n\rangle,$$
which is the desired relationship.
\end{proof}

Using Lemma~\ref{prob-representation}, we show that (1) $\Rightarrow$ (2) in Theorem~\ref{main-theorem} as the following lemma:
\begin{lemma}\label{half-main-theorem}
Let $C_n$ be an arbitrary polynomial-size quantum circuit on $n$ qubits initialized to $|0^n\rangle$, and $m$ be an arbitrary integer such that 
$m \leq n \leq {\rm poly}(m)$. We assume that, for any SCP $f_m$, $C_n$ followed by $f_m$ is classically simulable. Then, 
for any $s \in \{0,1\}^m\setminus\{0^m\}$, the Pauli expectation value
$$\langle 0^n |C_n^\dagger (Z(s) \otimes I_{n-m}) C_n |0^n\rangle$$
is classically approximable.
\end{lemma}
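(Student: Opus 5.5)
Fix $s \in \{0,1\}^m\setminus\{0^m\}$. The plan is to apply assumption (1) to the parity function on the coordinates selected by $s$, namely
$$f_m(x) = s\cdot x = \bigoplus_{j:s_j=1} x_j,$$
and then read off the Pauli expectation value from Lemma~\ref{prob-representation}.

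First I check that this $f_m$ is an admissible SCP. It is computable in polynomial time. It is non-zero because $s\neq 0^m$: choose $x$ with a single 1 at a position where $s_j=1$, and then $f_m(x)=1$. It is sparse because $f_m(x)=\frac12-\frac12(-1)^{s\cdot x}$, so its only non-zero Fourier coefficients are $\widehat{f_m}(0^m)=\frac12$ and $\widehat{f_m}(s)=-\frac12$. This gives sparsity $2$, which is trivially bounded by a known polynomial.

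Next I compute $g_m(x)=(-1)^{f_m(x)}=(-1)^{s\cdot x}$. By uniqueness of the Fourier expansion, $\widehat{g_m}(s)=1$ and every other coefficient is $0$. Substituting into Lemma~\ref{prob-representation} collapses the sum to a single term:
$$p(C_n,f_m)=\frac12-\frac12\langle 0^n|C_n^\dagger (Z(s)\otimes I_{n-m})C_n|0^n\rangle.$$
Equivalently, the Pauli expectation value equals $1-2p(C_n,f_m)$.

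Finally, I use assumption (1) to get an approximation. Given any polynomial $q(n)$, I run the simulation algorithm for $C_n$ followed by $f_m$ with accuracy $1/(2q(n))$. With probability greater than $1-1/\exp(n)$ it returns $A$ with $|p(C_n,f_m)-A|\le 1/(2q(n))$. I then output $1-2A$, which is within $1/q(n)$ of the expectation value, with the same success probability and polynomial running time. A polynomial-time algorithm for $f_m$ is trivially available, since it just takes an inner product with $s$.

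The only point needing care is the admissibility of $f_m$ as a non-zero SCP, which is exactly why $s=0^m$ is excluded. The rest is bookkeeping of the factor $2$ in the error.
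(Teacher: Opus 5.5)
Your proposal is correct and follows essentially the same route as the paper: take the parity $s\cdot x$ as the SCP, note it has only two non-zero Fourier coefficients, observe that $g_m(x)=(-1)^{s\cdot x}$ has a single coefficient at $s$, and use Lemma~\ref{prob-representation} to get $p(C_n,f_m)=\frac12-\frac12\langle 0^n|C_n^\dagger (Z(s)\otimes I_{n-m})C_n|0^n\rangle$. Your explicit check that $f_m$ is non-zero and your factor-of-$2$ accounting in the error are small details the paper leaves implicit.
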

\begin{proof}
Fix an arbitrary $s\in \{0,1\}^m\setminus\{0^m\}$. We define a Boolean function $h_m^s:\{0,1\}^m \to \{0,1\}$ as
$$h_m^s(x) = s\cdot x = \frac{1}{2} - \frac{1}{2}(-1)^{s\cdot x},$$
which is classically computable in polynomial time. From the representation of the function,
\begin{equation*}
\widehat{h_m^s}(t)=
\begin{cases}
\frac{1}{2} & \text{if $t=0^m$,} \\ 
-\frac{1}{2} & \text{if $t=s$,} \\
0 & \text{otherwise.}
\end{cases}
\end{equation*}
We can verify this using the following relationship: for any $t\in \{0,1\}^m$,
$$\widehat{h_m^s}(t)= \frac{1}{2^m}\sum_{x \in\{0,1\}^m}h_m^s(x)(-1)^{t\cdot x} 
= \frac{1}{2^{m+1}}\sum_{x \in\{0,1\}^m}(-1)^{t\cdot x} - \frac{1}{2^{m+1}}\sum_{x \in\{0,1\}^m}(-1)^{(s+t)\cdot x}.$$
Thus,
$$| \{t\in \{0,1\}^m \mid \widehat{h_m^s}(t)\neq 0\}| = 2,$$
which implies that $h_m^s$ is sparse.

We can regard $h_m^s$ as an SCP. By the assumption of Lemma~\ref{half-main-theorem}, 
$C_n$ followed by $h_m^s$ is classically simulable, which means that $p(C_n,h_m^s)$ is classically approximable. By Lemma~\ref{prob-representation},

$$p(C_n,h_m^s) = \frac{1}{2}-\frac{1}{2}\sum_{t\in \{0,1\}^m} \widehat{g_m^s}(t) \langle 0^n |C_n^\dagger (Z(t) \otimes I_{n-m}) C_n |0^n\rangle,$$
where $g_m^s(x)=(-1)^{h_m^s(x)} = (-1)^{s\cdot x}$ for any $x\in\{0,1\}^m$. From the representation of $g_m^s$, it holds that
\begin{equation*}
\widehat{g_m^s}(t)=
\begin{cases} 
1 & \text{if $t=s$,} \\
0 & \text{otherwise.}
\end{cases}
\end{equation*}
We can verify this using the following relationship: for any $t \in \{0,1\}^m$,
$$\widehat{g_m^s}(t)= \frac{1}{2^m}\sum_{x \in\{0,1\}^m}g_m^s(x)(-1)^{t\cdot x} = \frac{1}{2^m}\sum_{x \in\{0,1\}^m}(-1)^{(s+t)\cdot x}.$$
Thus,
$$p(C_n,h_m^s)= \frac{1}{2} - \frac{1}{2}\langle 0^n |C_n^\dagger (Z(s) \otimes I_{n-m}) C_n |0^n\rangle.$$
Since $p(C_n,h_m^s)$ is classically approximable, the Pauli expectation value 
$$\langle 0^n |C_n^\dagger (Z(s) \otimes I_{n-m}) C_n |0^n\rangle$$
is also classically approximable. This completes the proof of the lemma.
\end{proof}

\subsection{Proof of \texorpdfstring{(2) $\Rightarrow$ (1)}{(2) => (1)} in Theorem~\ref{main-theorem}}\label{Chernoff-section}

To show that (2) $\Rightarrow$ (1) in Theorem~\ref{main-theorem}, by Lemma~\ref{prob-representation}, it suffices to show the following lemma:
\begin{lemma}\label{key-lemma}
Let $C_n$ be an arbitrary polynomial-size quantum circuit on $n$ qubits initialized to $|0^n\rangle$, and $m$ be an arbitrary integer such that 
$m \leq n \leq {\rm poly}(m)$. We assume that, for any $s \in \{0,1\}^m\setminus\{0^m\}$, the Pauli expectation value 
$$\langle 0^n |C_n^\dagger (Z(s) \otimes I_{n-m}) C_n |0^n\rangle$$
is classically approximable. Then, for any SCP $f_m$,
$$p'(C_n,f_m) = \sum_{s\in \{0,1\}^m} \widehat{g_m}(s) \langle 0^n |C_n^\dagger (Z(s) \otimes I_{n-m}) C_n |0^n\rangle$$
is classically approximable, where $g_m(x)=(-1)^{f_m(x)}$ for any $x\in\{0,1\}^m$.
\end{lemma}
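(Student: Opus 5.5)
We plan to turn the sum over all $2^m$ strings into a sum over polynomially many terms, and then to approximate each term separately. Since $f_m$ is sparse and $g_m = 1-2f_m$, we have $\widehat{g_m}(s) = [s=0^m] - 2\widehat{f_m}(s)$. Hence the set
$$L=\{s\in\{0,1\}^m \mid \widehat{g_m}(s)\neq 0\}$$
has size at most $|\{s \mid \widehat{f_m}(s)\neq 0\}|+1 \le r(m)$ for some known polynomial $r$. Therefore
$$p'(C_n,f_m)=\sum_{s\in L}\widehat{g_m}(s)\langle 0^n|C_n^\dagger(Z(s)\otimes I_{n-m})C_n|0^n\rangle.$$
The term $s=0^m$ contributes exactly $\widehat{g_m}(0^m)$, because that expectation value equals $1$. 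This is why assumption (2) only needs $s\neq 0^m$.

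\textbf{Step 1: find $L$.} Since $g_m$ is $\pm1$-valued, Parseval gives $\sum_s\widehat{g_m}(s)^2=1$. We run the Kushilevitz--Mansour algorithm with membership queries to $g_m$, which are computable in polynomial time via $f_m$, and with a threshold $\theta = 1/\mathrm{poly}(m)$. It outputs, with probability $1-2^{-\Omega(n)}$ after polynomially many repetitions, a list $L'$ of polynomial size that contains every $s$ with $|\widehat{g_m}(s)|\ge\theta$.

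The main obstacle is the coefficients of $L$ that fall below $\theta$ and may be missed. We control them through sparsity: the coefficients in $L$ that are missed have total absolute value at most $|L|\theta\le r(m)\theta$. Each of them multiplies an expectation value of modulus at most $1$. So choosing $\theta = 1/(4q(n)r(m))$ makes their total contribution at most $1/(4q(n))$, where $q$ is the target error polynomial. (Alternatively one could use the known structure result that granular sparse functions have all nonzero coefficients at least $2^{-O(\log r)}$, but the threshold argument suffices.)

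\textbf{Step 2: estimate each coefficient.} For each $s\in L'$ we estimate $\widehat{g_m}(s)=\mathbb{E}_x[g_m(x)(-1)^{s\cdot x}]$ by averaging over uniformly random samples $x$. By the Chernoff--Hoeffding bound, $\mathrm{poly}(n)$ samples give additive error $\epsilon_1$ with failure probability $2^{-n}$.

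\textbf{Step 3: estimate each expectation value.} For each $s\in L'\setminus\{0^m\}$ we invoke assumption (2) to obtain an estimate $B_s$ with additive error $\epsilon_2$ and failure probability exponentially small.

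\textbf{Combining.} The algorithm outputs $A=\sum_{s\in L'}\widetilde{g}(s)B_s$, where $\widetilde{g}(s)$ is the estimate from Step 2 and $B_{0^m}=1$. Using $|\widehat{g_m}(s)|\le 1$, $|B_s|\le 1+\epsilon_2$, and the triangle inequality, the error on the terms in $L'$ is at most $|L'|(\epsilon_1(1+\epsilon_2)+\epsilon_2)$. Any $s\in L'\setminus L$ contributes only its estimation error, which is already covered by this bound. Taking $\epsilon_1,\epsilon_2 = 1/(8q(n)|L'|)$, which is polynomially small since $|L'|\le\mathrm{poly}(m)\le\mathrm{poly}(n)$ (note $m\le n$), makes the total error at most $1/q(n)$. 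A union bound over the polynomially many subroutines keeps the overall failure probability exponentially small. Finally, $n\le\mathrm{poly}(m)$ ensures that everything runs in time polynomial in $n$.
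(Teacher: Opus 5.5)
Your proposal is correct and follows essentially the same route as the paper. You bound the support $L$ of $\widehat{g_m}$ via the sparsity of $f_m$, find the heavy coefficients with Kushilevitz--Mansour at an inverse-polynomial threshold so that missed terms contribute at most $|L|\theta$, estimate the coefficients by Chernoff--Hoeffding and the expectation values by assumption (2) with $B(0^m)=1$, and combine everything by the triangle inequality. The only cosmetic difference is that the paper uses the Kushilevitz--Mansour guarantee $|\widehat{g_m}(s)|>1/(2\theta)$ to ensure $\widetilde{L}\subseteq L$, whereas you allow false positives and absorb them into the per-term estimation error, which works equally well.
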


We fix an arbitrary SCP $f_m$ and consider $g_m(x)=(-1)^{f_m(x)}$. We first observe that, for any $s\in \{0,1\}^m$, the value $\widehat{g_m}(s)$ is classically approximable. This is a simple application 
of the Chernoff-Hoeffding bound, which can be described as follows~\cite{Ni}: Let $X_1,\ldots,X_K$ be i.i.d. real-valued random variables 
such that $X_j \in [-1,1]$ and $E={\mathbb E}X_j$ for any $1\leq j \leq K$. Then, for any polynomial $q(m)$,
$${\rm Prob}\left[ \left|  \frac{1}{K}\sum_{j=1}^K X_j -E \right| \leq \frac{1}{q(m)} \right] \geq 1-2e^{-\frac{K}{4q(m)^2}}.$$
Indeed, we fix an arbitrary $s\in \{0,1\}^m$ and define $X_j$ as the real-valued random variable
$$X_j = g_m(x)(-1)^{s\cdot x},$$
where $x \in\{0,1\}^m$ is chosen uniformly at random. It holds that $X_j \in [-1,1]$ and
$$E = {\mathbb E}X_j= \frac{1}{2^m}\sum_{x \in\{0,1\}^m}g_m(x)(-1)^{s\cdot x} = \widehat{g_m}(s).$$
We choose $K={\rm poly}(m)$ sufficiently large and compute $g_m(x)(-1)^{s\cdot x}$ for $K$ independently sampled inputs $x \in\{0,1\}^m$. 
By the Chernoff-Hoeffding bound, $\widehat{g_m}(s)$ is approximated by the average of these values with polynomially-small additive error and with 
exponentially-small error probability. Since $f_m$ is classically computable in polynomial time, so is $g_m$. This can be summarized as the following lemma:
\begin{lemma}\label{Chernoff}
There exists a polynomial-time probabilistic algorithm that, for any $s \in \{0,1\}^m$ and polynomial $q(m)$, outputs a real number $A(s)$ such that,
with probability greater than $1-1/\exp(m)$,
$$|\widehat{g_m}(s) -A(s) | \leq \frac{1}{q(m)}.$$
\end{lemma}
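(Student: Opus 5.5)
The plan is to formalize the Chernoff-Hoeffding sampling argument sketched just before the statement of Lemma~\ref{Chernoff}.

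\textbf{Algorithm.} Given $s \in \{0,1\}^m$ and a polynomial $q(m)$, set $K = \lceil 4q(m)^2 (m+1)\rceil$, which is polynomial in $m$. The algorithm draws $K$ independent uniform strings $x^{(1)},\ldots,x^{(K)} \in \{0,1\}^m$. For each $j$ it evaluates $f_m(x^{(j)})$ with the given polynomial-time classical algorithm and computes $X_j = (-1)^{f_m(x^{(j)})}(-1)^{s\cdot x^{(j)}}$. The inner product $s\cdot x^{(j)}$ costs $O(m)$ time. The algorithm then outputs $A(s) = \frac{1}{K}\sum_{j=1}^K X_j$. The total running time is $K$ times a polynomial, hence polynomial in $m$.

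\textbf{Correctness.} The $X_j$ are i.i.d. and take values in $\{-1,1\} \subseteq [-1,1]$. By the definition of the Fourier coefficient,
\[
{\mathbb E}X_j = \frac{1}{2^m}\sum_{x}g_m(x)(-1)^{s\cdot x} = \widehat{g_m}(s).
\]
Applying the stated Chernoff-Hoeffding bound gives $|A(s)-\widehat{g_m}(s)| \le 1/q(m)$ with probability at least $1-2e^{-K/(4q(m)^2)} \ge 1-2e^{-(m+1)}$. Since $2e^{-(m+1)} = (2/e)e^{-m} < e^{-m}$, this probability is greater than $1-1/\exp(m)$, as required.

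\textbf{Main point of care.} The argument is routine. The only step needing attention is choosing $K$ so that it stays polynomial while the failure probability $2e^{-K/(4q^2)}$ falls strictly below $e^{-m}$; the extra factor $m+1$ in $K$ absorbs the constant $2$. One should also note that the algorithm is uniform in $s$ and $q$: it takes $s$ as input, $q$ only fixes the number of samples, and $f_m$ is accessed only through its polynomial-time evaluator.
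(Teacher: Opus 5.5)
Your proposal is correct and follows the paper's argument: Chernoff--Hoeffding applied to the i.i.d.\ samples $X_j = g_m(x)(-1)^{s\cdot x}$ over uniformly random $x$, whose mean is $\widehat{g_m}(s)$. Your explicit choice $K=\lceil 4q(m)^2(m+1)\rceil$, which gives failure probability at most $2e^{-(m+1)}<e^{-m}$, simply makes concrete the paper's ``choose $K={\rm poly}(m)$ sufficiently large.''
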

\noindent
We note that, since $m \leq n \leq {\rm poly}(m)$, a similar argument implies that the lemma holds when $q(m)$ and $\exp(m)$ is replaced with $q(n)$ and $\exp(n)$, respectively.

Another observation is that the value $p'(C_n,f_m)$ can be represented as the sum of polynomially-many terms. Indeed, it is straightforward to show 
that $g_m$ is sparse:

\begin{lemma}\label{g-sparse}
The function $g_m$ is sparse, where $g_m(x)=(-1)^{f_m(x)}$ for any $x\in\{0,1\}^m$.
\end{lemma}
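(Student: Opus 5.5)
The approach is to express $g_m$ linearly in terms of $f_m$ and read off its Fourier coefficients directly. Since $f_m$ takes values in $\{0,1\}$, we have the pointwise identity
$$g_m(x) = (-1)^{f_m(x)} = 1 - 2f_m(x)$$
for every $x\in\{0,1\}^m$. The Fourier transform is linear, and the constant function $1$ has $\widehat{1}(s)=1$ if $s=0^m$ and $0$ otherwise. Therefore, for every $s \in \{0,1\}^m$,
$$\widehat{g_m}(s) = [s=0^m] - 2\widehat{f_m}(s).$$
Each coefficient follows immediately from the defining formula $\widehat{g_m}(s)=2^{-m}\sum_x g_m(x)(-1)^{s\cdot x}$ together with $\sum_x(-1)^{s\cdot x}=2^m[s=0^m]$.

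From this formula, $\widehat{g_m}(s)\neq 0$ forces either $s=0^m$ or $\widehat{f_m}(s)\neq 0$. This gives the containment
$$\{s \mid \widehat{g_m}(s)\neq 0\} \subseteq \{0^m\}\cup\{s \mid \widehat{f_m}(s)\neq 0\}.$$
Consequently the Fourier sparsity of $g_m$ is at most $1+r(m)$, where $r(m)$ is the known polynomial bounding the sparsity of $f_m$. Since $1+r(m)$ is again a known polynomial in $m$, $g_m$ is sparse in the sense of Section~\ref{fourier}.

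There is no real obstacle in this argument. The only point requiring care is that sparsity demands a \emph{known} polynomial bound, and $1+r(m)$ supplies one. As a by-product, the same identity shows that every nonzero coefficient of $g_m$ with $s\neq 0^m$ has the same support as the corresponding coefficient of $f_m$. Hence $\deg(g_m)\le \deg(f_m)$, and in fact $\deg(g_m)=\deg(f_m)$ unless $f_m$ is constant; this is the fact used later for Theorem~\ref{second-theorem}.
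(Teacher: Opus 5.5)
Your proof is correct and follows the paper's argument exactly: the identity $g_m = 1-2f_m$ gives $\widehat{g_m}(s) = [s=0^m] - 2\widehat{f_m}(s)$, so the Fourier sparsity of $g_m$ is at most one more than that of $f_m$, which is again a known polynomial bound. Your side remark on degrees also matches the paper's later use, and in fact $\deg(g_m)=\deg(f_m)$ holds for every non-zero $f_m$, including the constant $f_m\equiv 1$ where both degrees are $0$.
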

\begin{proof}
It holds that, for any $x\in\{0,1\}^m$, $g_m(x) = 1 - 2 f_m(x)$. Thus, for any $s\in \{0,1\}^m$,
\begin{align*}
\widehat{g_m}(s)
&= \frac{1}{2^m}\sum_{x \in\{0,1\}^m}g_m(x)(-1)^{s\cdot x} = \frac{1}{2^m}\sum_{x \in\{0,1\}^m} (1-2f_m(x))(-1)^{s\cdot x} \\
&= \frac{1}{2^{m}}\sum_{x \in\{0,1\}^m}(-1)^{s\cdot x} -2\widehat{f_m}(s).
\end{align*}
Thus,
\begin{equation*}
\widehat{g_m}(s)=
\begin{cases} 
1-2\widehat{f_m}(s) & \text{if $s=0^m$,} \\
-2\widehat{f_m}(s) & \text{otherwise.}
\end{cases}
\end{equation*}
This means that, for any $s\in\{0,1\}^m \setminus \{0^m\}$, $\widehat{g_m}(s) \neq 0$ if and only if $\widehat{f_m}(s) \neq 0$. 
Since
$$| \{s\in \{0,1\}^m \mid \widehat{f_m}(s)\neq 0\}|$$
is upper-bounded by some known polynomial in $m$, so is
$$| \{s\in \{0,1\}^m \mid \widehat{g_m}(s)\neq 0\}|.$$ 
More concretely,
$$| \{s\in \{0,1\}^m \mid \widehat{g_m}(s)\neq 0\}| \leq  | \{s\in \{0,1\}^m \mid \widehat{f_m}(s)\neq 0\}| +1,$$
which completes the proof of the lemma.
\end{proof}

By Lemma~\ref{g-sparse}, $p'(C_n,f_m)$ can be represented as
$$p'(C_n,f_m) = \sum_{s\in L} \widehat{g_m}(s) \langle 0^n |C_n^\dagger (Z(s) \otimes I_{n-m}) C_n |0^n\rangle,$$
where
$$L= \{s\in \{0,1\}^m \mid \widehat{g_m}(s)\neq 0\}$$
and $|L|$ is upper-bounded by some polynomial in $m$, which can be 
represented as $q_L(n)$ since $m \leq n \leq {\rm poly}(m)$. We note that 
$q_L(n)$ is known since, by the definition of SCP, the polynomial that upper bounds the Fourier sparsity of $f_m$ is known 
and the inequality in the proof of Lemma~\ref{g-sparse} holds. A key point is that almost all significant elements of $L$ can 
be obtained efficiently by using (a version of) Kushilevitz-Mansour Theorem:

\begin{theorem}[\cite{Kushilevitz}]\label{k-m-theorem}
There exists a polynomial-time probabilistic algorithm that, for any $h:\{0,1\}^m \to \{-1,1\}$ that is classically computable in polynomial time 
and polynomial $\theta(m)$, 
outputs $\widetilde{L} \subseteq \{0,1\}^m$ such that, with probability greater than $1-1/\exp(m)$, for any $s \in \{0,1\}^m$, 
\begin{itemize}
\item If $s \in \widetilde{L}$, then $|\widehat{h}(s)| > 1/(2\theta(m))$.

\item If $s \notin \widetilde{L}$, then $|\widehat{h}(s)| < 1/\theta(m)$.

\item $|\widetilde{L}| < 4\theta(m)^2$.
\end{itemize}
\end{theorem}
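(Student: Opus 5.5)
We prove the Kushilevitz--Mansour statement with the standard Goldreich--Levin style search. It combines a prefix-tree recursion over the Fourier spectrum with Chernoff--Hoeffding estimation of partial Fourier weights.

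\textbf{Setup.} For a prefix $\alpha\in\{0,1\}^k$ with $0\le k\le m$, define the weight
$$W(\alpha)=\sum_{\beta\in\{0,1\}^{m-k}}\widehat{h}(\alpha\beta)^2 .$$
By Parseval, $\sum_s \widehat{h}(s)^2=\mathbb{E}_x[h(x)^2]=1$ since $h$ is $\pm1$-valued. Hence at each level $k$ there are at most $1/\tau$ prefixes with $W(\alpha)\ge\tau$. A standard calculation expresses $W(\alpha)$ as an expectation:
$$W(\alpha)=\mathbb{E}_{x,x'\in\{0,1\}^{k},\,y\in\{0,1\}^{m-k}}\bigl[h(xy)h(x'y)(-1)^{\alpha\cdot(x+x')}\bigr].$$
This follows by expanding $h$ in its Fourier basis and using orthogonality of the characters in $y$. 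The integrand lies in $[-1,1]$ and can be evaluated in polynomial time because $h$ can. So by the Chernoff--Hoeffding bound quoted above, with $K=\mathrm{poly}(m)$ samples we obtain an estimate $\widetilde W(\alpha)$ with $|\widetilde W(\alpha)-W(\alpha)|\le \epsilon$, for any inverse-polynomial $\epsilon$, with failure probability $e^{-\Omega(K\epsilon^2)}$.

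\textbf{Recursion.} Set $\tau=1/(4\theta(m)^2)$. Starting from the empty prefix, at each level extend every surviving prefix by $0$ and by $1$. Keep the extension only if $\widetilde W\ge \tau/2$, using estimation accuracy $\epsilon=\tau/4$. Every prefix with true weight $\ge 3\tau/4$ survives, and every survivor has true weight $\ge\tau/4$. Therefore at most $4/\tau=16\theta^2$ prefixes survive per level, and the total number of estimations is $O(m\theta^2)$, which is polynomial.

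\textbf{Final filtering.} At level $m$ each survivor $s$ satisfies $W(s)=\widehat h(s)^2\ge\tau/4$. Any $s$ with $|\widehat h(s)|\ge 1/\theta$ has all its prefixes of weight at least $1/\theta^2\ge 3\tau/4$, so it survives every level. It remains to enforce the exact thresholds in the statement. For each surviving $s$, estimate $\widehat h(s)=\mathbb{E}_x[h(x)(-1)^{s\cdot x}]$ directly, exactly as in Lemma~\ref{Chernoff}, to accuracy $1/(8\theta)$. Put $s$ in $\widetilde L$ iff the estimate has absolute value $\ge 3/(4\theta)$. Then:
\begin{itemize}
\item every $s\in\widetilde L$ has $|\widehat h(s)|\ge 5/(8\theta)>1/(2\theta)$;
\item every $s$ with $|\widehat h(s)|\ge1/\theta$ is included, so every $s\notin\widetilde L$ has $|\widehat h(s)|<1/\theta$;
\item by Parseval, $|\widetilde L|\le (8\theta/5)^2<4\theta^2$.
\end{itemize}

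\textbf{Error and main obstacle.} A union bound over the polynomially many estimations makes the total failure probability $\mathrm{poly}(m)\,e^{-\Omega(K/\theta^4)}$. This is below $1/\exp(m)$ once $K=\Theta(m\,\theta^4)$ times a suitable polynomial factor. The main obstacle is the identity for $W(\alpha)$ as an efficiently samplable expectation, together with calibrating thresholds so the estimation slack never lets a heavy prefix be pruned or lets the surviving list blow up. I would verify the identity first by direct Fourier expansion: averaging over $x,x'$ with the character $(-1)^{\alpha\cdot(x+x')}$ isolates the prefix $\alpha$ in both copies of $h$, and averaging over $y$ forces the two suffixes to coincide. That yields $\sum_\beta\widehat h(\alpha\beta)^2$.
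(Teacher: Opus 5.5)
Your argument is correct. It is the standard Kushilevitz--Mansour (Goldreich--Levin) prefix-tree search that the paper invokes by citation without reproducing, and your final re-estimation of each surviving $\widehat h(s)$ to accuracy $1/(8\theta)$ neatly yields the exact strict thresholds and the bound $|\widetilde L|<4\theta^2$ of the paper's ``version'' of the theorem. The only detail to add is an explicit cap on survivors: abort if more than $16\theta(m)^2$ prefixes survive at some level. Without it, the running time is polynomial only on the high-probability event where all estimates are accurate, not in the worst case.
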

\noindent
We note that, since $m \leq n \leq {\rm poly}(m)$, a similar argument implies that the theorem holds when $\theta(m)$ and $\exp(m)$ is replaced 
with $\theta(n)$ and $\exp(n)$, respectively.

Before proving Lemma~\ref{key-lemma}, under the assumption of the lemma, for an arbitrary 
polynomial $p(n)$, we describe a polynomial-time probabilistic algorithm that, with exponentially-small error probability, approximates $p'(C_n,f_m)$ with additive error $1/p(n)$:

\begin{enumerate}
    \item Apply the Kushilevitz--Mansour algorithm in Theorem~\ref{k-m-theorem} to the function $g_m$ with 
    the threshold polynomial $\theta(n)=3p(n)q_L(n)$ to obtain a set $\widetilde{L}\subseteq\{0,1\}^m$ of significant Fourier coefficients of $g_m$.  

    \item For each $s\in\widetilde{L}$, apply the algorithm in Lemma~\ref{Chernoff} 
    to the index $s$ with the accuracy polynomial $q(n)=24p(n)\theta(n)^2$ to obtain an approximation $A(s)$ of $\widehat{g_m}(s)$.

    \item For each $s\in\widetilde{L}$, if $s= 0^m$, set $B(s)=1$. Otherwise, apply the algorithm assumed in Lemma~\ref{key-lemma} to the index $s$ with 
    the accuracy polynomial $r(n)=12p(n)\theta(n)^2$ to obtain an approximation $B(s)$ of the Pauli expectation value 
    $$\langle 0^n |C_n^\dagger (Z(s) \otimes I_{n-m}) C_n |0^n\rangle.$$

    \item Output the value
    $$\sum_{s\in  \widetilde{L}} A(s)B(s).$$
\end{enumerate}

We show Lemma~\ref{key-lemma} as follows, which completes the proof of Theorem~\ref{main-theorem}:

\begin{proof}[Proof of Lemma~\ref{key-lemma}]

It suffices to show that, with probability greater than $1-1/\exp(n)$, 
$$| p'(C_n,f_m) - \sum_{s\in  \widetilde{L}} A(s)B(s) | \leq \frac{1}{p(n)}.$$
The set $\widetilde{L}$ obtained in Step 1 of the above algorithm satisfies that, 
with probability greater than $1-1/\exp(n)$, for any $s \in \{0,1\}^m$,
\begin{itemize}
\item If $s \in \widetilde{L}$, then $|\widehat{g_m}(s)| > 1/(2\theta(n))$.

\item If $s \notin \widetilde{L}$, then $|\widehat{g_m}(s)| < 1/\theta(n)$.

\item $|\widetilde{L}| < 4\theta(n)^2$.
\end{itemize}
Moreover, for each $s \in \widetilde{L}$, with probability greater than $1-1/\exp(n)$,
$$|\widehat{g_m}(s) - A(s) | \leq \frac{1}{q(n)}.$$
Furthermore, for each $s \in \widetilde{L}$ (including the case where $0^m\in \widetilde{L}$), with probability greater than $1-1/\exp(n)$,
$$| \langle 0^n |C_n^\dagger (Z(s) \otimes I_{n-m}) C_n |0^n\rangle -B(s) | \leq \frac{1}{r(n)}.$$
Hence, with probability greater than $1-1/\exp(n)$, all the above relationships hold simultaneously. In what follows, 
we analyze the error conditioned on this event.

We can bound the following quantity as follows:
\begin{align*}
\left|p'(C_n,f_m)-\sum_{s\in \widetilde{L}} A(s)B(s)\right| 
\leq& \left|p'(C_n,f_m) -\sum_{s\in \widetilde{L}} \widehat{g_m}(s) \langle 0^n |C_n^\dagger (Z(s) \otimes I_{n-m}) C_n |0^n\rangle\right|  \\
&+ \left|\sum_{s\in \widetilde{L}} \widehat{g_m}(s) \langle 0^n |C_n^\dagger (Z(s) \otimes I_{n-m}) C_n |0^n\rangle - \sum_{s\in \widetilde{L}} \widehat{g_m}(s)B(s)\right|\\
&+ \left|\sum_{s\in \widetilde{L}} \widehat{g_m}(s)B(s) - \sum_{s\in  \widetilde{L}} A(s)B(s)\right|.
\end{align*}
The first term of the right hand side of the above inequality is upper-bounded as follows:
\begin{align*}
& \left|p'(C_n,f_m) -\sum_{s\in \widetilde{L}} \widehat{g_m}(s) \langle 0^n |C_n^\dagger (Z(s) \otimes I_{n-m}) C_n |0^n\rangle\right|\\
=& \left|\sum_{s\in L \setminus \widetilde{L}} \widehat{g_m}(s) \langle 0^n |C_n^\dagger (Z(s) \otimes I_{n-m}) C_n |0^n\rangle\right|\\
\leq&  \sum_{s\in L \setminus \widetilde{L}} |\widehat{g_m}(s)| |\langle 0^n |C_n^\dagger (Z(s) \otimes I_{n-m}) C_n |0^n\rangle| < \frac{q_L(n)}{\theta(n)}=\frac{1}{3p(n)}, 
\end{align*}
since $\widetilde{L} \subseteq L$, $|\widehat{g_m}(s)| < 1/\theta(n)$ for any $s\in L \setminus \widetilde{L}$, 
$$|\langle 0^n |C_n^\dagger (Z(s) \otimes I_{n-m}) C_n |0^n\rangle| \leq 1$$
for any $s\in \{0,1\}^m$, and $|L|\leq q_L(n)$. 

The second term is upper-bounded as follows:
\begin{align*}
& \left|\sum_{s\in \widetilde{L}} \widehat{g_m}(s) \langle 0^n |C_n^\dagger (Z(s) \otimes I_{n-m}) C_n |0^n\rangle - \sum_{s\in \widetilde{L}} \widehat{g_m}(s)B(s)\right| \\
\leq& \sum_{s\in \widetilde{L}} |\widehat{g_m}(s)||\langle 0^n |C_n^\dagger (Z(s) \otimes I_{n-m}) C_n |0^n\rangle - B(s)| 
< \frac{4\theta(n)^2}{r(n)} = \frac{1}{3p(n)},
\end{align*}
since $|\widehat{g_m}(s)| \leq 1$ for any $s\in \{0,1\}^m$ and $|\widetilde{L}| < 4\theta(n)^2$.

The third term is upper-bounded as follows:
\begin{align*}
& \left|\sum_{s\in \widetilde{L}} \widehat{g_m}(s)B(s) - \sum_{s\in  \widetilde{L}} A(s)B(s)\right| 
\leq \sum_{s\in \widetilde{L}} |\widehat{g_m}(s) - A(s)||B(s)|  \leq \frac{1}{q(n)} \sum_{s\in \widetilde{L}}|B(s)| \\
\leq& \frac{1}{q(n)} \sum_{s\in \widetilde{L}}  \left( \left|\langle 0^n |C_n^\dagger (Z(s) \otimes I_{n-m}) C_n |0^n\rangle - B(s)\right| + 
\left|\langle 0^n |C_n^\dagger (Z(s) \otimes I_{n-m}) C_n |0^n\rangle\right| \right)\\
\leq& \frac{1}{q(n)} \sum_{s\in \widetilde{L}} \left(\frac{1}{r(n)} + 1\right) < \frac{4\theta(n)^2}{q(n)}\left(\frac{1}{r(n)} + 1\right) 
= \frac{1}{6p(n)}\left(\frac{1}{r(n)} + 1\right) < \frac{1}{3p(n)},
\end{align*}
since we can assume that $1/r(n) < 1$ without loss of generality. Therefore,
$$\left|p'(C_n,f_m) - \sum_{s\in \widetilde{L}} A(s)B(s)\right| < \frac{1}{3p(n)} + \frac{1}{3p(n)} +  \frac{1}{3p(n)} = \frac{1}{p(n)},$$
which is the desired relationship.
\end{proof}

\section{Applications of Theorem~\ref{main-theorem}}

\subsection{CT states and ECS operations}\label{idea}

We apply Theorem~\ref{main-theorem} to various quantum circuits for performing classically-hard tasks. A key step in 
applying Theorem~\ref{main-theorem} is to verify that a polynomial-size quantum circuit $C_n$ satisfies item (2) in Theorem~\ref{main-theorem}, or more concretely, 
satisfies that, for any $s \in \{0,1\}^m\setminus\{0^m\}$, the Pauli expectation value
$$\langle 0^n |C_n^\dagger (Z(s) \otimes I_{n-m}) C_n |0^n\rangle$$
is classically approximable. To do this, we introduce computationally tractable (CT) states and efficiently computable sparse (ECS) operations~\cite{van2}, 
where our notion of ECS is a restricted version tailored to our setting.

Let $|\varphi\rangle$ be an arbitrary quantum state on $n$ qubits and $p_\varphi$ be the probability distribution 
over $\{0,1\}^n$ defined as $p_\varphi(x)=|\langle x |\varphi\rangle|^2$. Then, $|\varphi\rangle$ is CT if the following two conditions are satisfied:
\begin{itemize}
\item The probability distribution $p_\varphi$ is classically samplable in polynomial time.

\item For any $x \in\{0,1\}^n$, the inner product $\langle x |\varphi\rangle$ is classically computable in polynomial time.
\end{itemize}
An example of a CT state is a product state. We note that, for simplicity, we require perfect accuracy in sampling probability distributions and computing 
the inner product, but irrational numbers may be involved and thus it suffices to require exponential accuracy. The same convention applies to the following definition of ECS operations.

Let $U$ be an arbitrary operation on $n$ qubits that is both unitary and Hermitian.  We say that $U$ is sparse if there exists a polynomial $s(n)$ such that,
for any $x \in \{0,1\}^n$, $U|x\rangle$ is a linear combination of at most $s(n)$ 
computational basis states. For a sparse operation $U$ (associated with $s(n)$), we define functions
$$
\beta_j : \{0,1\}^n \to \mathbb{C},\ \gamma_j : \{0,1\}^n \to \{0,1\}^n
$$
for each $1 \le j \le s(n)$ as follows: for any $x \in \{0,1\}^n$, consider the column of $U$ indexed by $x$ and 
traverse this column from top to bottom.
\begin{itemize}
    \item If the $j$-th non-zero entry exists:
    \begin{itemize}
        \item $\beta_j(x)$ is the value of this entry.
        \item $\gamma_j(x)$ is the row index of this entry.
    \end{itemize}
    \item If the $j$-th non-zero entry does not exist:
    \begin{itemize}
        \item $\beta_j(x) = 0.$
        \item $\gamma_j(x) = 0^n.$
    \end{itemize}
\end{itemize}
The sparse operation $U$ is ECS if, for any $1 \leq j \leq s(n)$, the functions $\beta_j$ and $\gamma_j$ are classically computable in polynomial time. In 
particular, an ECS operation with $s(n)=1$ is called efficiently computable basis-preserving.

An efficiently computable basis-preserving operation preserves the class of CT states as follows:
\begin{theorem}[\cite{van2}]\label{van2-1}
Let $|\varphi\rangle$ be an arbitrary CT state on $n$ qubits and $U$ be an arbitrary efficiently computable 
basis-preserving operation on $n$ qubits. Then, $U|\varphi\rangle$ is CT.
\end{theorem}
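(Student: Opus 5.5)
To show that $U|\varphi\rangle$ is CT, I would verify the two defining conditions separately. Both rely on the explicit form of an efficiently computable basis-preserving operation.

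Since $U$ is unitary and $s(n)=1$, each column of $U$ has exactly one non-zero entry. Because $U$ is unitary, its columns are orthonormal, so this entry has modulus $1$. The map $\gamma_1$ must be a bijection on $\{0,1\}^n$: two columns with their non-zero entries in the same row would fail to be orthogonal. Hence
$$U|x\rangle=\beta_1(x)|\gamma_1(x)\rangle,\qquad |\beta_1(x)|=1 .$$
Since $U$ is Hermitian, $U=U^\dagger$, and therefore $U^{-1}=U$. Then $U|y\rangle=\beta_1(y)|\gamma_1(y)\rangle$ and $U^2=I$ together give $\gamma_1(\gamma_1(y))=y$, so $\gamma_1^{-1}=\gamma_1$, which is efficiently computable.

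\textbf{Computing amplitudes.} Write $|\varphi\rangle=\sum_x \langle x|\varphi\rangle |x\rangle$. Then
$$U|\varphi\rangle=\sum_x \langle x|\varphi\rangle\,\beta_1(x)\,|\gamma_1(x)\rangle .$$
For a target string $y$, the only contributing $x$ is $x=\gamma_1(y)$, so
$$\langle y|U|\varphi\rangle=\beta_1(\gamma_1(y))\,\langle \gamma_1(y)|\varphi\rangle .$$
Every factor here is polynomial-time computable: $\gamma_1$ and $\beta_1$ by the ECS assumption, and $\langle \gamma_1(y)|\varphi\rangle$ because $|\varphi\rangle$ is CT. Alternatively, Hermiticity gives directly $\langle y|U=(U|y\rangle)^\dagger=\overline{\beta_1(y)}\langle\gamma_1(y)|$, so $\langle y|U|\varphi\rangle=\overline{\beta_1(y)}\langle\gamma_1(y)|\varphi\rangle$. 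This second form avoids inverting $\gamma_1$ and is the one I would use.

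\textbf{Sampling.} By the amplitude formula and $|\beta_1|=1$,
$$p_{U\varphi}(y)=|\langle \gamma_1(y)|\varphi\rangle|^2=p_\varphi(\gamma_1(y)).$$
So I would sample $x\sim p_\varphi$ using the CT sampler, then output $y=\gamma_1(x)$. Because $\gamma_1$ is an involution, $y=\gamma_1(x)$ exactly when $x=\gamma_1(y)$, so the output is distributed as $p_{U\varphi}$.

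The only step needing care is justifying that $\gamma_1$ is a bijection (indeed an involution) and that $|\beta_1|=1$. This comes from combining unitarity, which makes the columns orthonormal, with Hermiticity, which gives $U^2=I$. The exponential-accuracy caveat for irrational values carries through because these are finitely many multiplications.
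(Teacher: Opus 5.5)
Your proof is correct and complete. The paper gives no argument of its own for this statement; it imports it from Van den Nest. Your route is the natural one: sample $x\sim p_\varphi$ and output $\gamma_1(x)$, and compute amplitudes by pulling $y$ back through the basis permutation.

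One point is specific to the paper's restricted definition, and you handle it correctly. The ECS data $\beta_1,\gamma_1$ describe \emph{columns} of $U$. Computing $\langle y|U|\varphi\rangle$ instead needs the \emph{row} of $U$ indexed by $y$, equivalently $\gamma_1^{-1}(y)$. Hermiticity makes $\gamma_1$ an involution and each row the conjugate of the corresponding column, so this comes for free.

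Without the Hermitian requirement, your sampling half would still go through, since it uses only that $\gamma_1$ is a bijection, and that follows from unitarity. The amplitude half, however, would need a separate assumption that the inverse permutation is efficiently computable.
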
\noindent
The following theorem is a specialized version of the result in~\cite{van2}, tailored to our context of classical approximability:
\begin{theorem}[\cite{van2}]\label{van2-2}
Let $U$ be an arbitrary unitary operation on $n$ qubits such that $U|0^n\rangle$ is CT, and $O$ be an arbitrary Hermitian operation 
with $||O|| \leq 1$, where $||O||$ is the maximum of the absolute values of the eigenvalues 
of $O$. Let $V$ be an arbitrary unitary operation on $n$ qubits such that $V^\dag OV$ is ECS. Then, 
the real number
$$\langle 0^n|U^\dag V^\dag O VU|0^n\rangle$$
is classically approximable.
\end{theorem}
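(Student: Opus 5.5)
We must prove Theorem~\ref{van2-2}: for $U|0^n\rangle$ CT and $W := V^\dagger O V$ ECS with $\|W\| = \|O\| \leq 1$, the real number $\langle \varphi | W | \varphi\rangle$ is classically approximable, where $|\varphi\rangle = U|0^n\rangle$. The plan is a standard importance-sampling estimator combined with the Chernoff--Hoeffding bound quoted before Lemma~\ref{Chernoff}.

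First, write
$$\langle \varphi|W|\varphi\rangle=\sum_{x:\, p_\varphi(x)>0} p_\varphi(x)\,\frac{\langle \varphi|W|x\rangle}{\langle \varphi|x\rangle}.$$
This identity holds because $\langle x|\varphi\rangle = 0$ whenever $p_\varphi(x)=0$, so those terms contribute nothing to $\sum_x \langle\varphi|W|x\rangle\langle x|\varphi\rangle$.

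Second, show the ratio is efficiently computable for a sampled $x$. Using sparsity,
$$W|x\rangle=\sum_{j=1}^{s(n)}\beta_j(x)\,|\gamma_j(x)\rangle.$$
Hence
$$\langle\varphi|W|x\rangle=\sum_j \beta_j(x)\,\overline{\langle \varphi|\gamma_j(x)\rangle}\cdots$$
More precisely, it equals $\sum_j \beta_j(x)\,\overline{\langle\gamma_j(x)|\varphi\rangle}$. Each factor is polynomial-time computable by the ECS property of $W$ and the second CT condition on $|\varphi\rangle$. The padding entries with $\beta_j=0$ are harmless. So the algorithm is: sample $x\sim p_\varphi$ using the first CT condition, compute the estimator $\xi(x)=\langle\varphi|W|x\rangle/\langle\varphi|x\rangle$, and average $K={\rm poly}(n)$ independent samples. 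Since $W$ is Hermitian, we may take the real part of each sample without changing the mean.

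The main obstacle is that Chernoff--Hoeffding as quoted requires bounded variables, while $\xi(x)$ need not lie in $[-1,1]$ pointwise. I would first try to bound the second moment instead:
$$\mathbb{E}|\xi|^2=\sum_x |\langle\varphi|W|x\rangle|^2=\|W|\varphi\rangle\|^2\leq 1.$$
Chebyshev then gives constant success probability with additive error $1/q(n)$ using $O(q(n)^2)$ samples. I would amplify this to $1-1/\exp(n)$ by a median-of-means construction: take the median of $O(n)$ independent batch averages, and apply the Chernoff bound to the indicator variables recording whether each batch is accurate. Because these indicators are bounded, the quoted Chernoff--Hoeffding bound applies directly to them.

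Finally, exponentially accurate (rather than exact) arithmetic in the CT and ECS subroutines would perturb each sample slightly. I would control this by noting that $\langle\varphi|x\rangle$ is bounded away from tiny values with high probability: by Markov, outcomes with $p_\varphi(x)<2^{-{\rm poly}}$ occur with negligible total probability, so relative errors remain polynomially small.
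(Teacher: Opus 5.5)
This paper gives no proof of its own and simply cites Van den Nest~\cite{van2}; your argument is correct and is essentially the proof in that reference: sample $x\sim p_\varphi$, evaluate $\xi(x)=\langle\varphi|W|x\rangle/\langle\varphi|x\rangle$ via the ECS column data and the CT amplitudes, bound $\mathbb{E}|\xi|^2\le\|W|\varphi\rangle\|^2\le 1$, and then use Chebyshev plus a median-of-means amplification. Two cosmetic corrections: your second-moment identity should read ``$\le$'' because the sum runs only over the support of $p_\varphi$, and the bound on the total mass of outcomes with $p_\varphi(x)<\delta$ is the counting bound $2^n\delta$, not Markov's inequality.
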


\subsection{IQP circuits and the Quantum part of Simon's algorithm}

We consider the following quantum circuit on $n$ qubits initialized to $|0^n\rangle$~\cite{van2}:
\begin{enumerate}
\item Apply Hadamard gates to some subset of qubits $Q$.

\item Apply a polynomial-size quantum circuit $D_n$ on $n$ qubits for an efficiently computable basis-preserving operation.

\item Apply Hadamard gates to some subset of qubits $R$.
\end{enumerate}
The special case $Q=R=\{1,\ldots ,n\}$ is depicted in Fig.~\ref{figure2} (a). For example, when $Q=R=\{1,\ldots ,n\}$, $m=n$, and 
$D_n$ consists of $Z$-diagonal gates such as $Z$, $CZ$, and $CCZ$ gates, the circuit reduces to an IQP circuit. IQP circuits 
have been extensively studied as candidates for demonstrating quantum computational advantage. Moreover, the quantum 
part of Simon's algorithm is the one with $Q=R=\{1,\ldots ,n/2\}$ and 
$m=n/2$\footnote{Precisely speaking, the basis-preserving operation in Simon's algorithm depends on a given oracle and thus 
we need to consider an argument relative to the oracle, but it is straightforward to do this~\cite{Schwarz}.}.

Let $C_n$ be an arbitrary polynomial-size quantum circuit of this form, i.e., $C_n=H_RD_nH_Q$, where $H_Q$ and $H_R$ are layers of $H$ gates 
on qubits $Q$ and on qubits $R$, respectively. We can show that, for any $s \in \{0,1\}^m\setminus\{0^m\}$,
$$\langle 0^n |C_n^\dagger (Z(s) \otimes I_{n-m}) C_n |0^n\rangle = \langle 0^n |H_QD_n^\dagger H_R (Z(s) \otimes I_{n-m}) H_RD_nH_Q |0^n\rangle$$
is classically approximable. Indeed, the product state $H_Q|0^n\rangle$ is a CT state and thus, by Theorem~\ref{van2-1}, $D_nH_Q|0^n\rangle$ is also a CT state. 
Moreover, since $HZH=X$ and $H^2=I$, the operation $H_R (Z(s) \otimes I_{n-m}) H_R$ can be represented as a tensor product of $X$ and $I$, which is obviously 
an ECS operation. Thus, by Theorem~\ref{van2-2}, the Pauli expectation value
$$\langle 0^n |C_n^\dagger (Z(s) \otimes I_{n-m}) C_n |0^n\rangle$$
is classically approximable. Therefore, by Theorem~\ref{main-theorem}, for any SCP $f_m$, $C_n$ followed by $f_m$ is classically simulable. 
This recovers the result of Van den Nest~\cite{van2}.

\subsection{Clifford Magic circuits}

We consider a Clifford Magic circuit on $n$ qubits initialized to $|0^n\rangle$~\cite{Yoganathan}, which has been proposed as 
another model exhibiting quantum computational advantage:
\begin{enumerate}
\item Apply Hadamard gates to all qubits.

\item Apply $T$ gates to all qubits.

\item Apply a polynomial-size Clifford circuit $E_n$ on $n$ qubits, which consists of $H$ gates, $S$ gates, and $CZ$ gates.
\end{enumerate}
This circuit is depicted in Fig.~\ref{figure2} (b), where $m$ is usually set to $n$ when the classical hardness of sampling its output probability distribution is discussed.

Let $C_n=E_nT^{\otimes n}H^{\otimes n}$ be an arbitrary polynomial-size Clifford Magic circuit. We can show that, for any $s \in \{0,1\}^m\setminus\{0^m\}$,
$$\langle 0^n |C_n^\dagger (Z(s) \otimes I_{n-m}) C_n |0^n\rangle 
= \langle 0^n |H^{\otimes n}  T^{\dagger \otimes n}  E_n^\dagger(Z(s) \otimes I_{n-m}) E_nT^{\otimes n}H^{\otimes n} |0^n\rangle$$
is classically approximable. Indeed, $T^{\otimes n}H^{\otimes n}|0^n\rangle$ is a product state and thus a CT state. Moreover, since $E_n$ is a Clifford circuit, 
the operation $E_n^\dagger (Z(s) \otimes I_{n-m}) E_n$ is a Pauli operation, which is obviously an ECS operation. Thus, by Theorem~\ref{van2-2}, 
the Pauli expectation value
$$\langle 0^n |C_n^\dagger (Z(s) \otimes I_{n-m}) C_n |0^n\rangle$$
is classically approximable. Therefore, by Theorem~\ref{main-theorem}, for any SCP $f_m$, $C_n$ followed by $f_m$ is classically simulable. 
It is worth noting that this result does not follow immediately from the original framework of Van den Nest~\cite{van2}, which 
highlights the broader applicability of our Theorem~\ref{main-theorem} to circuits involving magic states.

\begin{figure}[t]
\centering
\includegraphics[width=1.0\linewidth]{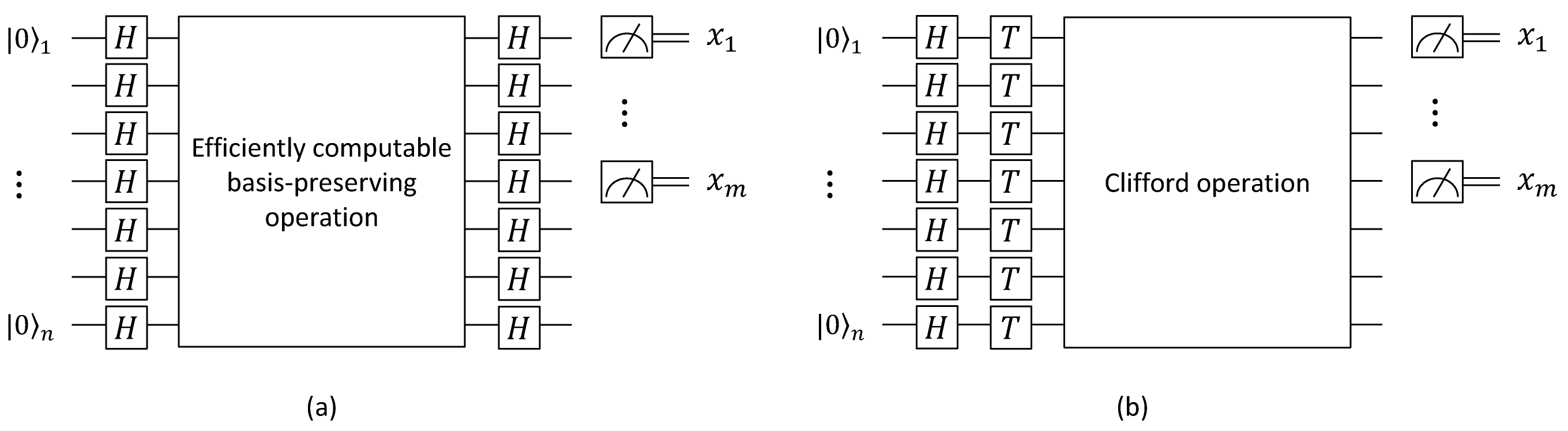}
\caption{(a) Quantum circuit for an efficiently computable basis-preserving operation sandwiched between two Hadamard layers. 
Examples of such circuits are IQP circuits and the quantum part of Simon's algorithm. (b) Clifford Magic circuit.}
\label{figure2}
\end{figure}

\section{Constant-depth quantum circuits followed by SCP}\label{commuting-section}

To analyze the simulablity of constant-depth quantum circuits followed by SCP, we introduce the model of commuting quantum circuits. 
A commuting quantum circuit on $n$ qubits is a quantum circuit on $n$ qubits consisting of pairwise 
commuting gates, where each commuting gate acts non-trivially on a constant number of qubits. A commuting quantum circuit on $n$ qubits is 
said to be $k$-local if the support of each commuting gate contains at most $k$ qubits, where $k=O(1)$~\cite{Ni}.

Let $C_n$ be an arbitrary polynomial-size quantum circuit of constant depth $d$ on $n$ qubits initialized to $|0^n\rangle$, and 
$m$ be an arbitrary integer such that $m \leq n \leq {\rm poly}(m)$. Moreover, we fix an arbitrary SCP $f_m$ and consider 
$g_m(x)=(-1)^{f_m(x)}$. To show Theorem~\ref{second-theorem}, we apply the algorithm described immediately before the 
proof of Lemma~\ref{key-lemma}. More concretely, it suffices to show that, for each $s \in \widetilde{L}$, the Pauli expectation value 
$$\langle 0^n |C_n^\dagger (Z(s) \otimes I_{n-m}) C_n |0^n\rangle$$
is approximated by a polynomial-time probabilistic algorithm with access to commuting quantum circuits, 
which satisfy the conditions in Theorem~\ref{second-theorem}, with polynomially-small additive error 
and with exponentially-small error probability. Since the Pauli expectation value is 1 if $s=0^m$, we set its estimate to 1 and 
focus on the case $s\neq 0^m$ in the following. We employ the Hadamard test, which allows us to estimate 
the desired Pauli expectation value. Indeed, for an arbitrary fixed $s \in \widetilde{L}$, we use the following quantum circuit 
on $n+1$ qubits initialized to $|0^{n+1}\rangle$:
\begin{enumerate}
\item Apply a Hadamard gate to the top qubit and $C_n$ to the other $n$ qubits.

\item For each $1 \leq j \leq m$ with $s_j=1$, apply a controlled-$Z$ gate to the pair of the top qubit and qubit $j$.

\item Apply a Hadamard gate to the top qubit and $C_n^\dagger$ to the other $n$ qubits.
\end{enumerate}
The circuit with $m=n$ and $s=1^n$ is depicted in the left hand side of Fig.~\ref{figure3}. The circuit for the Hadamard test is related to 
the Pauli expectation value as follows:
\begin{lemma}\label{out-prob}
When the top qubit of the quantum circuit for the Hadamard test is measured in the $Z$-basis after Step 3, the probability of obtaining $0$ is
$${\rm Prob}(0)= \frac{1+ \langle 0^n |C_n^\dagger (Z(s) \otimes I_{n-m}) C_n |0^n\rangle }{2}.$$
\end{lemma}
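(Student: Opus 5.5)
The plan is to track the state through the three steps of the Hadamard-test circuit directly. Write $|\psi\rangle = C_n|0^n\rangle$ and $P = Z(s)\otimes I_{n-m}$. The first step is to observe that the product of the controlled-$Z$ gates in Step 2, over all $j$ with $s_j=1$, equals the single controlled-$P$ operation $|0\rangle\langle 0|\otimes I_n + |1\rangle\langle 1|\otimes P$. This holds because each controlled-$Z$ acts as the identity when the top qubit is $|0\rangle$ and as $Z_j$ when it is $|1\rangle$, and the $Z_j$ commute.

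Next I will compute the state after each step. After Step 1 the state is $\frac{1}{\sqrt{2}}(|0\rangle|\psi\rangle + |1\rangle|\psi\rangle)$. After Step 2 it becomes $\frac{1}{\sqrt{2}}(|0\rangle|\psi\rangle + |1\rangle P|\psi\rangle)$. Step 3 applies $H$ to the top qubit and $C_n^\dagger$ to the rest, giving
$$\frac{1}{2}\Bigl(|0\rangle (I+C_n^\dagger P) C_n|0^n\rangle + |1\rangle (I - C_n^\dagger P)C_n|0^n\rangle\Bigr),$$
where I use $C_n^\dagger|\psi\rangle=|0^n\rangle$.

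Then $\mathrm{Prob}(0)$ is the squared norm of the $|0\rangle$ branch:
$$\frac{1}{4}\bigl\| (I + C_n^\dagger P C_n)|0^n\rangle\bigr\|^2 .$$
Set $W = C_n^\dagger P C_n$. This $W$ is unitary and Hermitian, since $P$ is a tensor product of Pauli $Z$ and identity operators. Hence $W^2=I$, and so
$$(I+W)^\dagger(I+W) = 2I + 2W.$$
This yields
$$\mathrm{Prob}(0) = \tfrac{1}{4}\bigl(2 + 2\langle 0^n|W|0^n\rangle\bigr) = \frac{1+\langle 0^n |C_n^\dagger (Z(s) \otimes I_{n-m}) C_n |0^n\rangle}{2}.$$

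The only step needing care is the first one, identifying the product of controlled-$Z$ gates with the controlled-$P$ operation. After that, the computation is a routine norm expansion. I also note that the expectation value is automatically real, because $W$ is Hermitian. This is consistent with $\mathrm{Prob}(0)$ being a probability.
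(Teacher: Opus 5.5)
Your proof is correct and is essentially the paper's argument: you track the state through the three steps, then expand $\frac{1}{4}\langle 0^n|(I+W)^2|0^n\rangle$ using that $W=C_n^\dagger(Z(s)\otimes I_{n-m})C_n$ is Hermitian with $W^2=I$. You justify this step, and the identification of the product of controlled-$Z$ gates with a single controlled-$P$, more explicitly than the paper does. One small slip: the $|0\rangle$ branch after Step 3 should be $(I+C_n^\dagger P C_n)|0^n\rangle$, not $(I+C_n^\dagger P)C_n|0^n\rangle$, although your next line already uses the correct form.
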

\begin{proof}
By the quantum circuit on $n+1$ qubits described above, we obtain the following state:
\begin{align*}
|0^{n+1}\rangle & \mapsto  \frac{1}{\sqrt{2}}|0\rangle\otimes C_n|0^n\rangle + \frac{1}{\sqrt{2}}|1\rangle\otimes C_n|0^n\rangle \\
                     & \mapsto  \frac{1}{\sqrt{2}}|0\rangle\otimes C_n|0^n\rangle + \frac{1}{\sqrt{2}}|1\rangle\otimes  (Z(s) \otimes I_{n-m}) C_n|0^n\rangle \\
                     & \mapsto  \frac{1}{\sqrt{2}}|0\rangle \otimes \frac{I+C_n^\dagger  (Z(s) \otimes I_{n-m}) C_n}{\sqrt{2}}|0^n\rangle 
+ \frac{1}{\sqrt{2}}|1\rangle \otimes \frac{I - C_n^\dagger  (Z(s) \otimes I_{n-m}) C_n}{\sqrt{2}}|0^n\rangle,
\end{align*}
where each arrow corresponds to each step. Thus,
\begin{align*}
{\rm Prob}(0) &= \frac{1}{4}\langle 0^n|(I+C_n^\dagger (Z(s) \otimes I_{n-m}) C_n)(I+C_n^\dagger (Z(s) \otimes I_{n-m}) C_n) |0^n\rangle \\
&= \frac{1+ \langle 0^n |C_n^\dagger (Z(s) \otimes I_{n-m}) C_n |0^n\rangle }{2},
\end{align*}
which is the desired relationship.
\end{proof}

Using this lemma, we can show Theorem~\ref{second-theorem} as follows:
\begin{proof}[Proof of Theorem~\ref{second-theorem}]
Let $X$ be the real-valued random variable defined as the classical outcome of the measurement in Lemma~\ref{out-prob}. Then,
$${\mathbb E}[X] =  \frac{1- \langle 0^n |C_n^\dagger (Z(s) \otimes I_{n-m}) C_n |0^n\rangle }{2},$$
which is the probability of obtaining 1. Thus, by the Chernoff-Hoeffding bound as described in Section~\ref{Chernoff-section}, there exists a polynomial-time probabilistic 
algorithm with access to the above quantum circuit that approximates the desired value
$$\langle 0^n |C_n^\dagger (Z(s) \otimes I_{n-m}) C_n |0^n\rangle$$
with polynomially-small additive error and with exponentially-small error probability.

\begin{figure}[t]
\centering
\includegraphics[width=1.0\linewidth]{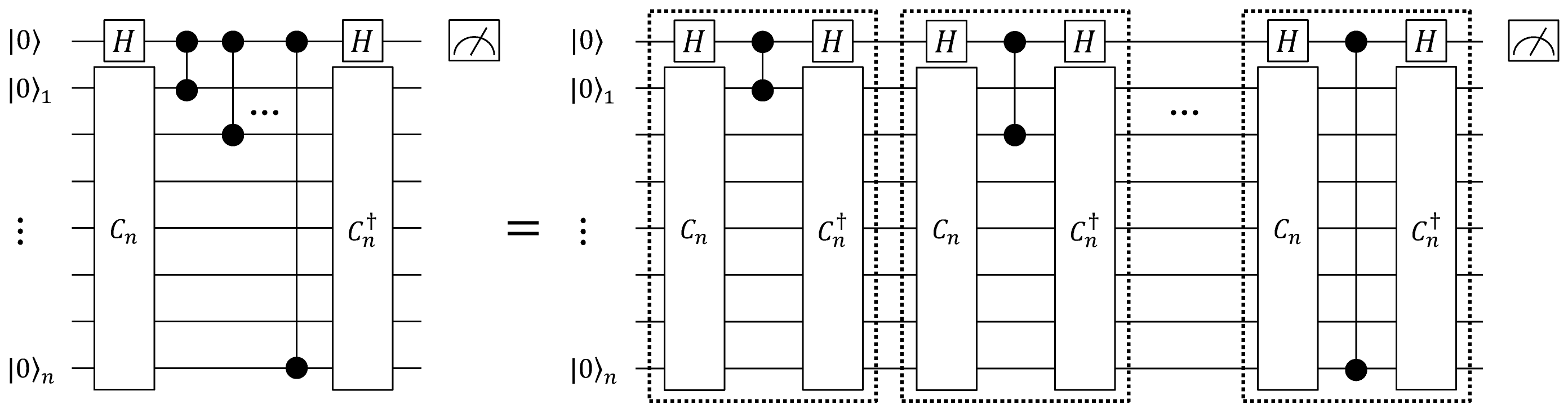}
\caption{Quantum circuit for approximating $\langle 0^n |C_n^\dagger Z(1^n) C_n |0^n\rangle$. 
The right hand side of the equation is a commuting quantum circuit whose output probability distribution is the same as that of the left hand side.}
\label{figure3}
\end{figure}

We now transform the above quantum circuit for the Hadamard test on $n+1$ qubits into a commuting quantum circuit 
on $n+1$ qubits without affecting the output probability distribution. Following the approach of~\cite{Ni}, this can be done by 
using the relationship
$$ C_n^\dagger (Z(s) \otimes I_{n-m}) C_n  =  \prod_{1 \leq j \leq m:s_j=1} (C_n^\dagger (Z_j \otimes I_{n-m}) C_n).$$
The resulting commuting quantum circuit with $m=n$ and $s=1^n$ is depicted in the right hand side of Fig.~\ref{figure3}. Indeed, 
for any $1 \leq j \leq m$ with $s_j=1$, we regard 
$$(H\otimes C_n^\dagger) (CZ_j \otimes I_{n-m}) (H\otimes C_n)$$
as a gate, where $CZ_j$ denotes the controlled-$Z_j$ gate acting on the top qubit and qubit $j$. 
The number of commuting gates in the commuting quantum circuit is $|s|$, which is upper-bounded by $\deg(g_m)$. Since $g_m(x)=1-2f_m(x)$ and $f_m$ is a non-zero function, 
it holds that $\deg(g_m) = \deg(f_m)$. Moreover, the number of qubits on which each commuting gate acts is at most
$$1+ \max \{|{\rm supp}(C_n^\dagger (Z_j \otimes I_{n-m}) C_n)| \mid 1 \leq j \leq m \}.$$
This completes the proof of Theorem~\ref{second-theorem}.
\end{proof}

\section{Conclusions and future work}

We studied the classical simulability of a polynomial-size quantum circuit $C_n$ followed by SCP $f_m$,
where $m \le n \le \mathrm{poly}(m)$. Our first contribution is a necessary and sufficient condition on $C_n$ ensuring that, for any SCP $f_m$, 
$C_n$ followed by $f_m$ is classically simulable. This characterization recovers the result of Van den Nest and applies equally to IQP circuits 
and Clifford Magic circuits, the latter of which had not been addressed in prior work.

We then considered the case where $C_n$ has constant depth. Although it is unlikely that for any SCP $f_m$, $C_n$ followed by $f_m$ is 
classically simulable, we showed that it becomes simulable by a polynomial-time probabilistic algorithm with access to commuting 
quantum circuits on $n+1$ qubits. We further quantified the required quantum resources by bounding both the number of commuting gates and 
their locality in terms of $\deg(f_m)$ and $|{\rm supp}(C_n^\dagger (Z_j \otimes I_{n-m}) C_n)|$. This provides a refined understanding of the 
hardness of simulating constant-depth quantum circuits followed by SCP.

Our results suggest several promising avenues for further research.

\begin{itemize}
    \item \textbf{Beyond Fourier sparsity.}
    SCP is defined via Fourier sparsity, but it would be natural to investigate whether our techniques extend to broader classes of post-processing functions, such as functions with quasi-polynomial Fourier sparsity or approximate sparsity. Understanding the precise boundary of post-processing that preserves classical simulability remains an open challenge.

    \item \textbf{Sharper bounds for constant-depth quantum circuits.}
    For constant-depth circuits, it would be interesting to determine whether the commuting quantum circuits used in Theorem~\ref{second-theorem} are optimal, or whether tighter upper or lower bounds on the required quantum resources can be established. In particular, understanding the minimal locality or depth needed for such commuting circuits could lead to a finer-grained hierarchy of simulability.

    \item \textbf{Computational power of commuting quantum circuits.}
    Our analysis highlights the utility of commuting quantum circuits as a simulation primitive. Further exploring their computational power---especially in relation to non-commuting extensions, Clifford commuting circuits, or restricted gate sets---may yield new insights into the structure of quantum advantage~\cite{Ni,Takahashi-commuting}.

\item \textbf{Functional sparsity vs.\ distributional sparsity.}
Our work focuses on functional sparsity in the sense of Fourier sparsity, whereas Schwarz et al.~\cite{Schwarz} studied distributional sparsity, where only polynomially many outcomes of the output distribution have non-negligible probability. Clarifying the relationship between these two notions may lead to a unified understanding of when quantum circuits admit efficient classical simulation.

    \item \textbf{Other quantum circuit families.}
    It would also be valuable to examine the classical simulability of other circuit families believed to perform classically-hard tasks, such as conjugated Clifford circuits~\cite{Bouland}. Applying our framework to these models may reveal new structural properties or unexpected simulability regions.
\end{itemize}

Overall, our results contribute to the broader program of understanding the boundary between quantum and classical computation, and we hope they stimulate further investigation into the interplay between quantum circuit structure, classical post-processing, and classical simulability.

\section*{Acknowledgements}
This work was supported by JST K Program Grant Number JPMJKP24U2, Japan. 



\bibliography{mybib-TCS}

\end{document}